\documentclass[12pt]{amsart}
\usepackage{blindtext}

\usepackage[utf8]{inputenc}
\usepackage[T2A,T1]{fontenc}
\usepackage[russian,english]{babel}

\usepackage{geometry}

\usepackage[utf8]{inputenc}

\usepackage{lipsum}
\newenvironment{acknowledgements}{
  
  \begin{abstract}
}{
  \end{abstract}
}

\usepackage{xcolor}
\usepackage[all]{xy}
\usepackage{amsmath}
\usepackage{amsfonts}
\usepackage{hyperref}
\usepackage{amssymb}
\usepackage{amscd}
\usepackage{amsthm}
\usepackage{latexsym}
\usepackage{amsbsy}
\usepackage{graphicx}
\usepackage{epstopdf}

\usepackage{shuffle}
\usepackage{mathtools}

\usepackage[dvipsnames]{xcolor}

\input xypic
\usepackage{tikz}
\usetikzlibrary{babel}
\usepackage{tikz-cd}

\usepackage{color}

\newtheorem{thm}{Theorem}[section]
\newtheorem{prop}[thm]{Proposition}

\newtheorem{lem}[thm]{Lemma}

\newtheorem{defn}[thm]{Definition}
\newtheorem{rem}[thm]{Remark}

\numberwithin{equation}{section}

\def\bC{{\mathbb C}}

\def\bE{{\mathbb E}}

\def\bP{{\mathbb P}}

\def\bR{{\mathbb R}}
\def\bS{{\mathbb S}}

\def\R{{\mathbb R}}

\def\cD{{\mathcal D}}

\def\cH{{\mathcal H}}
\def\cI{{\mathcal I}}

\def\cK{{\mathcal K}}
\def\cL{{\mathcal L}}

\def\cO{{\mathcal O}}

\def\cR{{\mathcal R}}

\def\cU{{\mathcal U}}
\def\cV{{\mathcal V}}
\def\cW{{\mathcal W}}
\def\cX{{\mathcal X}}

\def\chor{{H}}
\def\Ehor{{\cH}}
\def\cver{{V}}
\def\Ever{{\cV}}

\title[]{Curvature Sensitive Cells in the Modular Structure of the Visual Cortex}
\author{Giovanna Citti}
\address{Department of Mathematics, University of Bologna}
\email{giovanna.citti@unibo.it}

\author{Vasiliki Liontou}
\address{Department of Mathematics, University of Bologna}
\email{vasiliki.liontou@unibo.it}

\begin{document}

\begin{abstract}
We propose a model of the functional architecture of curvature-sensitive cells in the primary visual cortex. The model accounts for the modular and hierarchical organization of the cortex, the horizontal connectivity, and the shape of receptive profiles of these cells as Gabor-type filters. We construct a canonical affine subbundle of the cotangent bundle of the manifold of oriented contact elements of the retina as a geometric model for these cells, and show that this subbundle carries an Engel structure related to that of the Cartan prolongation. On an open  submanifold of the Cartan prolongation, we identify generators of the Engel distribution whose iterated Lie brackets span the Lie algebra of $SIM(2)$. 
The identification of $\mathfrak{sim(2)}$ as the Lie algebra of these generators determines $SIM(2)$ as the natural symmetry group for curvature-sensitive cells. 
Finally, we characterize the receptive profiles of curvature-sensitive cells as minima of a  $SIM(2)-$adapted uncertainty principle applied to the generators of the Engel structure.
\end{abstract}
\subjclass[2020]{92C20,
  53C15, 43A85}
\keywords{visual cortex, curvature detection, contact geometry, 
Engel structure, SIM(2), wavelet transform, uncertainty principle}
\maketitle

 \tableofcontents

\bigskip

\section{Introduction}

The primary visual cortex (V1) is the first stage of cortical visual processing. Visual stimuli arrive at the cortex as electrical signals propagated from the retinal surface via the optic nerve and lateral geniculate nucleus. As established by Hubel and Wiesel \cite{Hubel, HubelEye}, V1 has a modular and hierarchical organization. Modularity means that the signal is processed by different families of cells, each one sensitive to a different feature— orientation, scale, curvature, motion, and others. The hierarchical organization reflects the fact that visual processing proceeds from simple features, such as edges and contours, toward complex ones, such as curvature and motion.

Mathematically, a cell is a linear filter, characterized by a receptive profile $\Psi:B\rightarrow \bR$, a function on the previous hierarchical layer, whose shape is strictly related to the feature it is sensitive to. The receptive profiles of cells in the same  family form a parametric family  $\{\Psi_m\}_{m \in M}$, and the output of a cell is modeled as a convolution with the visual stimulus $I:B\rightarrow \bR$
$$\mathcal{O}(m) = \int_B \Psi_m \cdot I.$$
This represents the impulse response of the cell to the stimulus. The shape of the receptive profile $\Psi_m$   is constrained by an uncertainty principle reflecting the joint localization of the cell in feature space and in the stimulus domain \cite{DaugmanUncertainty, Marcelja, DaugmanFourier}.  

The interaction between cells and cortico-cortical propagation of neural activity  of the same module takes place along  horizontal connectivity.
 
 In the foundational works of Parent and Zucker  \cite{Parent}, Koenderink–van Doorn \cite{VanDoorn}, Hoffman \cite{Hoffman}, and Petitot–Tondut \cite{Tondut}, horizontal connectivity is modeled by a fiber bundle over the retinal plane equipped with a non-integrable hyperplane distribution —specifically a contact structure. In this framework, an image contour lifts canonically to the cortical module as an integral curve of the contact distribution, representing the pattern of neural activation induced by the stimulus. This geometric description accounts for contour integration and the long-range propagation of cortical activity, and has motivated the use of contact geometry, sub-Riemannian geometry, and Cartan's method of moving frames in vision modeling.

In 2006, Citti and Sarti \cite{FunctArch} introduced a Lie-theoretic approach to the functional architecture of V1, representing the orientation module in terms of stratified Lie groups equipped with a sub-Riemannian metric, chosen to reflect the local symmetries of the cortex. In particular, they identified the orientation module with the group $SE(2)$ of oriented isometries of $\bR^2$, which carries a natural contact structure. 
Sub-Riemannian geometry and properties of geodesics in this setting were used to identify good models of connectivity and association fields \cite{Cuspless, Cuspless2}. As a Lie group it enabled the application of harmonic analysis — notably Generalized Wavelet Transforms — to vision models \cite{Lee, DuitsSE, SETransform}. The identification of $SE(2)$ as the correct symmetry group also led to the characterization of orientation-selective receptive profiles via an $SE(2)$-adapted uncertainty principle \cite{Uncertainty}. More recently, the principle that visual processing should respect the symmetries of $SE(2)$, or of more general groups, has become a cornerstone of geometric deep learning \cite{Cohen, Acosta}.  The assumption that the retina is flat was  removed in \cite{Mashtakov} where the retina was modelled as a half sphere, and then in \cite{MarcLio} where a  more general model without a Lie group structure was presented. In this case the  appropriate space for parametrizing orientation detecting simple cells is the manifold of unit covectors $M=\bS(T^*B)$ over the retinal surface $B$ ({ analogous to $SE(2)$ for $B=\bR^2$}). 

 Combined geometric-signal analytic models appear successful in stimulus analysis and reconstruction via simple features such as orientation and scale because they allow good encoding and decoding of the visual stimulus \cite{ DuitsSE,Scale, Duits, SETransform}. 
 
  Curvature is a complex feature, which is processed by cortical cells pooling
inputs from orientation detection cells in the primary visual cortex. Several authors have approached aspects of this problem from different directions. Zucker and collaborators \cite{  Parent, Zucker, Jonas, ZuckerTexture},  use Cartan's moving frames to study curvature on curves as well as textures, laying the foundation for a computational treatment of curvature.
Petitot in \cite{PetitotCurv} has used 2-jet spaces of $\bR^2$ and their Engel structure to parametrized curvature in a coordinate dependent model while Citti and Sarti \cite{CortArch} extended  this model  by indicating the Engel group as the appropriate Lie group structure.  Sharma and Duits in \cite{Duits}  construct a $SIM(2)$ wavelet transform and show that the associated diffusion  improves curvature estimation compared to the analogous 
technique applied on the SE(2) transform, since it includes multiple scales. In a differential geometric treatment \cite{Bellini}, the space of curvature radii over a Riemannian surface $B$ is identified with its tangent bundle without the 0-section, carrying an Engel distribution. The space is identified with $SIM(2)$ when the surface is the Euclidean plane. 

While a first relation between the fiber bundle and the  sub-Riemannian approach was provided by 
 \cite{Cuspless, Cuspless2} in terms of cuspless geodesics as models for front propagation,
 the  precise connection between models, based on fiber bundle structures,  sub-Riemannian techniques, or wavelet transform methods,   remains to be clarified.

The scope of this work is to construct a modular model of the visual cortex, where  curvature detecting cells are described through the same mathematical mechanisms used in models of the previous processing layer- orientation selective cells. In analogy to the mechanism of orientation detection, a neuron should be represented by a mathematical object that ``extracts'' the desired feature. 
This leads to a model of iterative fiber bundles where each layer is a bundle of covectors of the previous layer.
Each layer is equipped with a non-integrable plane distribution modeling the horizontal connectivity within the module. Associated to this iterative bundle structure there is an iteration of integral transforms modeling the functionality of the cells as filters.  The association is established through the Lie algebra closure of the generators of the non-integrable distribution. 
The architecture of the cortex, in this reading, has a single repeated pattern.

 Our first objective is to identify the relation between fiber bundle and sub-Riemannian approaches for curvature detecting cells. The manifold $M$ which describes simple cells detecting orientation is a Legandrian circle bundle and the key observation is that it carries two transversal hyperplane distributions: one is the contact distribution $H$ and the other is the Ehresmann distribution induced by the metric connection. We use this observation to establish an intrinsic description of the directions of propagation of the neural activity within $M$- between neurons of similar orientation sensitivity. We use the standard decomposition of the velocity vector field of the Legendrian lift of a retinal contour (B-valued regular curve) into a vertical and a horizontal component, to ``locate'' the curvature on the vertical component of the contact distribution. Then, we identify the space of curvature detecting cells as a special affine subbundle of the dual $H^*$ of the contact distribution.  Finally, we show that the appropriate choice of covector on $H^*$ corresponding to a point in $M$ is characterized by a plane  distribution obtained as the kernel of a 1-form on $T^*M$. This mirrors, one level up, the procedure by which the contact structure on $M$ arises.
 
 The second objective is to build a bridge between the geometric modeling and signal analysis performed by the cells by determining the right Lie group structure.  We associate the previous construction  to the Cartan prolongation of $M$ which naturally carries an Engel structure. We show that the distribution on the special affine bundle of curvature detecting cells is Engel. We identify a pair of vector fields generating this distribution whose Lie algebra closure is isomorphic to $\mathfrak{sim}(2)$, the Lie algebra of the group of planar similitudes $SIM(2)$. 
In this way we recover for the description of curvature the this same group which emerged independently in models  \cite{Duits, Bellini}, 
based on different assumptions.

The third objective is to introduce signal analysis methods that are compatible with the geometric setting. We construct a family of curvature-sensitive receptive profiles extending the Gabor-type filters used for orientation detection to a two-layer architecture.    As curvature-selective cells receive in input the output  of the  $SE(2)$ transform (whose range will be denoted $\cR$), 
we use the quasi-regular representation of $SIM(2)$ on square integrable functions of $SE(2)$  to generate the family of curvature-selective receptive profiles.
Under a $SIM(2)-$covariance condition on the receptive profiles, they define a ${SIM}(2)$-wavelet transform with an effective kernel $f\in L^2(\bR^2)$.  This effective kernel represents the detectable receptive profile on the 2-dimensional projection of the retina in the cortex, although processing occurs in a higher cortical level.  Finally, we characterize the shape of the effective profiles using an uncertainty principle for $SIM(2)$ and, in particular, for the operators induced by the left-invariant generators of the Engel structure.

\section{Preliminaries: Orientation Selective Odd Simple Cells  }\label{Preliminaries}
In this chapter, we review models of Odd Simple Cells (OSCs) selective to orientation. The purpose of this review is to recall these models which will serve as the foundation for constructing a geometric framework for curvature-selective cells and use a uniform language for the two. We focus in particular on totally intrinsic model, independent of any reference axis or a coordinate system  apriori fixed, since there is no such a reference in the brain.

We consider the retina to be modeled by an oriented, 2-dimensional, Riemannian surface $(B, g)$ with metric $g$. 
  The functional architecture of  Orientation-Position OSCs is modeled by the oriented projectivized cotangent bundle of the retina--that is the set $\bS(T^*B)$ of non-
zero covectors in $T^*B$ where two covectors are identified if they differ by a positive real number,
 $$\bS(T^*B) = (T ^*B \backslash \{0\})/\bR^+$$
where $\{0\}$ is the zero section of $T^*B$ and $\bR^+$ denotes the positive real numbers. The existence of the metric $g$ allows us to identify the projectivized cotangent space of the retina with the unit covector bundle of the retina and so we will be using them interchangeably throughout. 
We will denote this space of Orientation-Position Odd Simple Cells by $M=\bS(T^*B)$. Each point \( q \in M \) is a pair \( (b, p) \), where \( b \in B \) is a base point and \( p \in T_b^*B \) is a unit covector. We will also use $\pi$ to denote the natural projection
\[ \pi: M \to B , \; \;  \pi(b, p) = b .\] The total space \( M \) forms an \( \mathbb{S}^1 \)-bundle over \( B \), as each fiber \( \pi^{-1}(b) \) is the space of unit covectors at \( b \), which is diffeomorphic to \( \mathbb{S}^1 \).

\paragraph{\textbf{Gauge Coordinate System}}
  As an oriented $\bS^1-$bundle, $M$ is a principle $\bS^1-$bundle. Precisely, the metric $g$ on $B$ induces a metric on each fiber $\pi^{-1}(b)$ of the unit cotangent bundle $\bS(T^*B)$ and therefore a length for a curve along the fiber. At each $b\in B$, we denote the length of the fiber $\pi^{-1}(b)$ (as a curve) to be $\ell_b$. Using the associated distance in the fiber,  the right action of $\bS^1$ on $M$ is 
    \begin{equation}
        \bS^1\times M \ni (\theta, q)\mapsto \theta\cdot q
    \end{equation}
    where $\theta\cdot q$ is the point on the fiber $\pi^{-1}(b)$ of $q$ whose distance along the fiber with respect to the orientation is $\frac{\theta}{2\pi }\ell_b$.

We consider gauge coordinates on $M$ by fixing a nowhere-vanishing covector field on a local chart of \( (U, (x,y)) \) of $B$, for instance \( dx \). Any covector \( q \in\pi^{-1}(U)\subset M \) can then be obtained by the action of $\bS^1$ on \( dx \). This defines a coordinate chart \( (\pi^{-1}(U) \setminus \{dx\}, \theta_x) \), where \( \theta_x(q) \in \mathbb{S}^1 \) is the unique angle such that
\begin{equation}\label{coorch}
\theta_x(q) \cdot dx = q.
\end{equation}
We denote by \( J \) the almost complex structure \( J: T_b B \to T_b B , J^2=-I\), compatible with $g$ and the orientation 
  By applying the dual operator $J^*: T^*B\rightarrow T^*B,~ J^*(q)(-) := q(J(-)) $ on covectors, we obtain another reference covector field \( J^* dx \). This yields a second chart \( (\pi^{-1}(U) \setminus \{J^* dx\}, \theta_J) \), where for \( q \in M \), the angle \( \theta_J(q) \in \mathbb{S}^1 \) satisfies
\[
\theta_J(q) \cdot J^* dx = q.
\]
Working similarly on can form an atlas that covers \( M \). When the reference covector field is understood from context, instead of $(x,y,\theta_x)$ we will use the notation  \( (x, y, \theta) \) for gauge coordinates on \( M \).

\subsection{Contact Structure and Lifting Contours from the retina to the Cortex}
To model the horizontal connectivity between OP-OSCs, the authors of \cite{FunctArch} assumed the retina to be identified with $\bR^2$ and considered a contact distribution $\chor$ on \( \bS(T^*\bR^2)\), defined in gauge coordinates $(x,y,\theta)$ as the kernel of the 1-form 
\begin{equation}\label{ContactForm}
    a = -\sin(\theta)\,dx + \cos(\theta)\,dy.
\end{equation}

 This contact structure allows for horizontal lifting of smooth retinal contours, namely curves \( \gamma: I\subset\bR \rightarrow \mathbb{R}^2 \) to cortical curves \( \Gamma: I \rightarrow \bS(T^*\bR^2) \) such that \( \dot{\Gamma}(t) \in \chor_{\Gamma(t)} \) for all \( t \in I \).

While there is a canonical contact structure on $\bS(T^*\bR^2)$, obtained by restricting the canonical Liouville form $\lambda\in T^*(T^*\bR^2), ~ \lambda_q=(d\pi)^* (q)$ to the unit covectors, it does not coincide with the contact distribution $\chor$ of the model. A horizontal lifting of curves with respect to the canonical contact distribution on $\bS(T^*\bR^2)$ lifts to covectors that annihilate the tangent vector of the original curve, while a lift with respect to $H$ lifts a curve to covectors that align with the tangent vector of the curve.

A coordinate free definition of the  contact distribution $\chor$ for an oriented Riemannian base $B$ was given in terms of a \textit{twisted Liouville form}  in  \cite{MarcLio} and we adopt the same definition here. Precisely, we consider the \textit{twisted Liouville form}
$$\lambda_J\in T^*(T^*B), ~
(\lambda_J)_q = (d\pi)^*(-J^*q),
$$
and take its restriction to  $M$, which is the contact 1-form \begin{equation}
    a_J\in T^*M, ~a_J:=(\lambda_J)|_{M}.
\end{equation}
Then the contact distribution $H$ is given by 
\begin{equation}\label{ContactDistribution}
    \chor = \ker(a_J).
\end{equation}
  The quotient bundle $TM/ H$ is trivial since $H$ is coorientable. It can be identified with the orthogonal complement $V:=H^{\bot_{g_M}}$  of $H$ 
with respect to the metric $g_M$ on $M$ induced by the metric $g$ on $B$. We will refer to $H$ as the horizontal distribution with respect to the contact structure, or contact-horizontal distribution while the distribution 
$V$ will be respectively called the vertical distribution with respect to the contact structure, so that we can write 
\begin{equation}
    TM\simeq \chor \oplus V
\end{equation}

This definition allows for an \emph{angle-based} horizontal lift of curves, as intended by the coordinate based model \cite{FunctArch}. 
  Indeed,  we consider the lifted curve $\Gamma$ of  a retinal curve $\gamma: I\rightarrow B$ as 
\begin{equation}\label{HorLift}
    \Gamma(t)=\{q=(b,p)\in M: b=\gamma(t) \text{ and } \ker(J^*p)=T_b\gamma\text{ and }p(\dot\gamma)>0\},
\end{equation}
that is, for each time \( t \), the lifted covector \( p \) is the unique element in the fiber over \( \gamma(t) \) whose  twisted covector $J^*p$  annihilates $T_b\gamma$ and is positively oriented. The lifted curve $\Gamma$ is horizontal with respect to $\chor$. \footnote{This assertion simply means  that \( a_J(\dot{\Gamma}(t)) = 0 \), which follows directly from the construction.}

This lift is aligned with the tangent angle of the velocity vector field of the curve $\gamma$. We will use the standard notation for the flat operator $\flat : TB\rightarrow T^*B$ with respect to $g$ and the sharp operator $\sharp: T^*B\rightarrow TB$.
\begin{lem}
    Let $\gamma:I\rightarrow B$ be a regular retinal curve, $u:I\rightarrow \bS B, ~ t\mapsto \frac{\dot{\gamma}(t)}{\|\dot{\gamma}(t)\|}$ its unit tangent vector field and $\Gamma: I\rightarrow M$ its horizontal lift, then $\Gamma(t)=\flat(u(t)), \text{for every } t$. Moreover, for any $q\in M_{|_\gamma}$ we have 
    $$atan2(g^*(\Gamma, J^*q), g^*(\Gamma, q))= atan2(g(u, J^*\sharp q),g(u, \sharp q))$$
\end{lem}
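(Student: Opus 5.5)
The plan is to first pin down $\Gamma(t)$ by the three conditions in \eqref{HorLift}, and then to reduce the angle identity to the elementary relation between the dual metric $g^*$ and the musical isomorphisms. Fix $t$, write $b=\gamma(t)$ and $u=u(t)$, and set $p_0:=\flat(u)\in T^*_bB$. I will check that $p_0$ satisfies each condition in \eqref{HorLift} and then argue uniqueness.

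\emph{Unit norm.} Since $\flat$ is an isometry $(T_bB,g)\to(T_b^*B,g^*)$, we have $\|p_0\|_{g^*}=\|u\|_g=1$, so $p_0\in M$.

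\emph{Kernel condition.} For $v\in T_bB$,
$$(J^*p_0)(v)=p_0(Jv)=g(u,Jv)=-g(Ju,v),$$
using that $J$ is $g$-orthogonal with $J^2=-I$, so that $J^{T}=-J$. Hence $\ker(J^*p_0)=(Ju)^{\perp_g}$. Because $B$ is $2$-dimensional and $Ju\perp u$, this is $\mathrm{span}(u)=T_b\gamma$.

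\emph{Positivity.} $p_0(\dot\gamma)=g(u,\dot\gamma)=\|\dot\gamma\|>0$, since $\gamma$ is regular.

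\emph{Uniqueness.} $J^*$ is invertible, so the covectors $p$ with $\ker(J^*p)=T_b\gamma$ form a line through $0$ minus the origin. Requiring $\|p\|=1$ leaves two antipodal points, and the sign condition $p(\dot\gamma)>0$ selects one of them. Therefore $\Gamma(t)=p_0=\flat(u(t))$.

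For the angle formula, the key identity is $g^*(\flat u,\alpha)=\alpha(u)=g(u,\sharp\alpha)$ for any covector $\alpha$. This is just the definition of the dual metric, since $\sharp$ is inverse to $\flat$. Applying it with $\alpha=q$ and with $\alpha=J^*q$, and using $\Gamma=\flat(u)$ from the first part, gives
$$g^*(\Gamma,q)=g(u,\sharp q),\qquad g^*(\Gamma,J^*q)=(J^*q)(u)=g\bigl(u,\sharp(J^*q)\bigr).$$
Thus the two arguments of $\mathrm{atan2}$ coincide pairwise, and the identity follows, with no sign or quadrant issue arising.

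The only delicate point is notational. The statement's $J^*\sharp q$ must be read as $\sharp(J^*q)$, i.e.\ transporting $J^*$ to vectors through the metric. Equivalently, $\sharp(J^*q)=J^{T}\sharp q=-J\sharp q$. I would record this convention explicitly at the start of the second part, so that the sign is consistent with the computation $J^*p_0=-\flat(Ju)$ obtained in the kernel step.
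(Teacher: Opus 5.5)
Your proposal is correct and follows essentially the paper's argument. The paper derives $g(\sharp\Gamma, Ju)=0$ from the kernel condition using $J^*\flat X=-\flat(JX)$ and concludes that $\sharp\Gamma$ is a positive multiple of $u$; this is the same computation you run in the verify-then-uniqueness direction, and the angle identity follows exactly as in your proof from $g^*(\flat u,\alpha)=g(u,\sharp\alpha)$, with your reading of $J^*\sharp q$ as $\sharp(J^*q)$ matching the paper's intermediate step $g(u,\sharp(J^*q))$.
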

\begin{proof}
Since $J^*\flat X=-\flat(JX)$ and by the definition of $\Gamma$ we have that \begin{align*}
    0= g^*(J^*\Gamma, u^\flat)=&g^*(\Gamma, -J^*u^\flat)   \\
    =&g(\Gamma^\sharp, -(J^*u^\flat)^\sharp)= g(\Gamma^\sharp, Ju).
\end{align*}
Thus, we obtain that $\sharp\Gamma=au,$ for some $a\in C^\infty(\bR^2)$ positive and therefore $\Gamma=\flat(u)$. Moreover, let $q\in M_|{_\gamma}$, then 
$$atan2(g^*(\Gamma, J^*q),g^*(\Gamma, q))=atan2(g(u,\sharp(J^*q)),g(u,\sharp q))=atan2(g(u, J^*\sharp q), g(u,\sharp q)).$$
\end{proof}

\begin{rem}
While there is no mathematical novelty in the previous lemma, we use it to clarify that there are two equivalent coordinate independent ways to consider the lift $\Gamma$ and which satisfy the purpose of the lift considered in \cite{FunctArch}, namely to lift a retinal contour with respect to the angular coordinate $\theta$. The first way is as in \eqref{HorLift} and the second is using the musical isomorphism.

\end{rem}

\subsection{Receptive Profiles of Orientation Sensitive Simple Cells and the SE(2) transform
}    
In this section, we review the mathematical background of  models for orientation-selective simple cells, in which the response  is  expressed via the SE(2)-transform, a standard construction in harmonic analysis \cite{DuitsSE,CR, SETransform}. The background provided here will only be used later in Chapter \ref{ReceptiveProfs}.

  The group $SE(2)$ of orientation preserving isometries of the plane $\bR^2$ consists of planar translations $t_{(x,y)}: \bR^2\rightarrow \bR^2, ~ (\xi,\eta)\mapsto (\xi,\eta)+(x,y)$, planar rotations $r_\theta:\bR^2\rightarrow \bR^2, ~ (\xi,\eta)\mapsto r_\theta (\xi,\eta), r_\theta\in  SO(2)$. It is an outer semidirect product $SE(2)=\bR^2 \rtimes \bS^1$ with elements $(x,y ,\theta) \in \bR^2 \times \bS^1$ and
composition law
$$(x,y ,\theta) \cdot (x^\prime, y^\prime, \theta^\prime) = ((x,y) + r_\theta (x^\prime,y^\prime), \theta + \theta^\prime).$$

Its Haar measure, that is the (Radon) measure on the group that is invariant under group
operations, is the Lebesgue measure on $\bR^2 \times \bS^1$.
A standard wavelet analysis of two-dimensional signals uses the quasi-regular representation of $SE(2)$ on $L^2(\bR^2)$
  \begin{align}\label{SERep}
    \rho(x,y,\theta)f(\xi,\eta)=&T_{(x,y)}R_\theta f(\xi,\eta)\\=&f(r_{-\theta}(\xi,\eta)-(x,y)),~f\in L^2(\bR^2)\notag
\end{align}
where 
$T_(x,y)$
 and 
$R_\theta$ denote translation and rotation operators. Then the $SE(2)-$ wavelet transform is defined with respect to this representation as follows.
\begin{defn}
    Let $\Psi_0\in L^2(\bR^2)$, the $SE(2)-$wavelet transform on $L^2(\bR^2)$ is the operator
    \begin{equation}\label{SETransform}
    \cO_{\Psi_0}(x,y,\theta)I=\int_{\bR^2}I(\xi,\eta)\overline{\rho(x,y,\theta)\Psi_{0}(\xi,\eta)}d\xi d\eta =\langle I, \rho(x,y,\theta)\Psi_0\rangle_{L^2(\bR^2)}.
\end{equation}

\end{defn}
The function $\Psi_0$ is often referred to as the \textit{mother window}(see, for example, \cite{Wavelets}, \cite{Wavelet2}, \cite{Wavelet3}, \cite{Fuhr}, \cite{DuitsSE}  and references therein).

 Orientation–position receptive profiles correspond to the analyzing wavelets of the 
$SE(2)$ wavelet transform \cite{DaugmanUncertainty, DaugmanFourier,Lee, FunctArch, SETransform}. The receptive profile selective to position $(x,y)$ and orientation $\theta$ is obtained from a mother window $\Psi_0$ through the group representation ${\rho(x,y,\theta)\Psi_0}$. Thus, the family of all receptive profiles is 
\begin{equation}\label{OrientationReceptiveProfiles}
    {RP_{SE(2)}}:=\{\Psi_{(x,y,\theta)}=T_{(x,y)}R_\theta \Psi_0: (x,y,\theta)\in \bR^2\times\bS^1\}.
\end{equation}
 For a visual stimulus $I\in L^2(\bR^2)$, the response of a receptive profile $\Psi_{(x,y,\theta)}$ is  the value of the $SE(2)-$transform $\cO_{\Psi_0}(x,y,\theta)I$.

 The transform $\cO$ is a bounded operator from $L^2(\bR^2)$ to $L^2(SE(2))$ if $\Psi_0 \in L^1(\bR^2)\cap L^2(\bR^2)$ but it cannot be injective \cite{Wavelets}. 
 
 For image analysis and reconstruction, invertibility is required. It is
possible to obtain a bounded and injective transform by reducing the
wavelet analysis to the space of bandlimited functions, namely functions whose Fourier
transform is supported on a bounded set,
\begin{equation*}
    PW_R:=\{f\in L^2(\bR^2): supp\hat{f}\in B_R\}, \text{ where }B_R = \{\xi \in \bR^2 : |\xi| < R\}. 
\end{equation*}
In this case, the image
of the $SE(2)$ transform is a reproducing kernel Hilbert subspace of $L^2(\bR^2\times \bS^1)$ (see \cite{SETransform}) which we will denote as $\cR \subset L^2(\bR^2\times \bS^1)$ and the $SE(2)-$transform \ref{SETransform} is an injective and bounded operator on $PW_R$ if and only if there exist constants $A,B>0$ such that 
\begin{equation}\label{Calderon}
    A<\int_{\bS^1} |\hat{\Psi}_0(R_\theta^{-1}\xi)|d\theta <B.
\end{equation}
For what follows we will consider the domain of $\mathcal{O}_\Psi(x,y,\theta)$ to be $PW_R$.

The cortical activity $I\in L^2(\bR^2)$ defined on the 2D cortical layer is propagated anisotropically along the horizontal long
range connectivity, which is effectively described in terms of the Lie algebra of $SE(2)$.  
The Lie algebra of $SE(2)$ has dimension three and it is generated by the infinitesimal transformations
along two orthogonal translations and the rotation of the group.  In order to take into account
the anisotropy of the connectivity, in  \cite{Tondut} and \cite{FunctArch} only two generators were considered instead of three. In
particular in \cite{FunctArch} it was proposed to choose as generators the left invariant vector fields $X_1$ and $X_2$ representing
only one infinitesimal translation and the infinitesimal rotation at a point $(x,y,\theta)$, 
\begin{equation}\label{vectorfieldscoord}
    X_1= cos({\theta})\partial_{{x}}+sin({\theta})\partial_{{y}},~ X_2=\partial_{{\theta}}.
\end{equation}
The differentiated representation $d\rho$ of the quisi-regular representation maps 
 the vector fields $X_1$ and $X_2$ to Skew-Adjoint operators on $L^2(\bR^2)$,
\begin{align*}\label{InfinitesimalAction}
    d\rho: &\mathfrak{se(2)}\rightarrow SkewAdj(L^2(\bR^2)),\\ &X_1\mapsto d\rho(X)f=\frac{d}{dt}|_{t=0}\rho(exp(tX))=\partial_{x_1}f(x_1,x_2)\\
    &X_2\mapsto d\rho(X)f=\frac{d}{dt}|_{t=0}\rho(exp(tX))=(x_1\partial_{x_2}-x_2\partial_{x_1})f(x_1,x_2)~, \text{ for } f\in L^2(\bR^2).
\end{align*}
Moreover, $X_1$ and $X_2$ do not commute and the non-commutativity of operators $d\rho(X_1)$ and $d\rho(X_2)$ is analogous to the non-commutativity of the momentum and position operators in the canonical quantum mechanical case, where the operators are obtained from generators of the polarized Heisenberg algebra with respect to the Schr\"odinger representation.  In that specific situation the variance
of the position and momentum operators on functions have a lower bound that has been formalized in
the well known Heisenberg uncertainty principle. The uncertainty principle has been formally extended (see Theorem 2.4,\cite{FollandUncertainty}) to connected Lie groups acting on Hilbert spaces. This extension was used in \cite{Uncertainty} to determine the shape of orientation-position-receptive profiles under the assumption that they should minimize the uncertainty principle 
\begin{equation}\label{UncertaintyEq}
    \|d\rho(X_1) u\|_{L^2(\bR^2)} \|d\rho(X_2) u\|_{L^2(\bR^2)} \geq \frac{1}{2}
 |\langle d\rho([X_1,X_2]) u, u\rangle _{L^2 (\bR^2)}|.
\end{equation}

Under this assumption, the mother receptive profile is of the form \begin{align}\label{MotherProfile}
     \Psi_0\in&~ L^2(\bR^2), \nonumber\\ ~&\Psi_0(x)=s\sqrt{2}\pi e^{-2\pi i p x_1}e^{-2\pi^2 s^2 |x|^2}, s,p\in \bR^+ .
 \end{align}

\section{The  Ehresmann connection and cells sensitive to Curvature }\label{Curvature}
The purpose of this chapter is to identify the appropriate space that represents the functionality of curvature detecting cells, in the modular organization of the cortex and in a mechanism that is parallel to that of orientation detection and it naturally extends it. We start with a key observation: since the model space of the horizontal connectivity of the simple cells is a Legendrian fiber bundle- its fibers are horizontal with respect to the contact structure- one can use the Ehressman connection on the unit cotangent bundle of the retina to provide  an intrinsic description of the activity propagation within the cortex (Defintion \ref{IntrinsicDirections}). 
Using an intrinsic description of propagation directions, we show that curvature is "located" in the vertical subbundle of the tangent bundle of $M$ (Proposition \ref{decomposition} ). As a result curvature detection should be represented by linear forms acting on this space. Finally, we introduce a special affine bundle over $M$ which is appropriate for modeling curvature detecting cells (Theorem \ref{CurvatureDetection}) . 

\subsection{Ehresmann Connection and Characterization of Retinal Curves lifted to M}

In this section, we adopt the standard differential geometric notation (see \cite{Morita,Tu,LeeCurv} for background).  We will consider the Levi-Civita connection $\nabla:TB\times TB\rightarrow TB$ induced by $g$ and we denote the parallel transport on $TB$ along a curve $\gamma:[0,1] \to B$  by $P_\gamma^{t_0, t}: T_{\gamma(t_0)}B\rightarrow T_{\gamma(t)}B$. Note that since the parallel transport preserves the metric, $P_\gamma^{t_0, t}$ restricts to a parallel transport on the unit tangent bundle $\bS B$, which we will denote as
$$P^{\bS, t_0, t}_\gamma:=P_\gamma^{t_0 t}|_{\bS B}: \bS_{\gamma(t_0)} B\rightarrow \bS_{\gamma(t)}B.$$
We will denote the covariant derivative of a vector field $X: B\rightarrow \bS B$ along a curve $\gamma$ at a point $ \gamma(t_0)$ as 
\begin{equation*}
    D^\bS_{t_0}X=\frac{d}{dt}\Big((P^{\bS,t_0, t}_
    \gamma)^{-1}\circ X_{\gamma(t)}\Big)_{|_{t=t_0}} \in T(\bS_{\gamma(t_0)}B).
\end{equation*}

Finally we recall that the covariant derivative induces an Ehresmann connection $\cH\subset TM$, by defining $\cH_q$, for $q \in M$ with $b=\pi(q)$, to be the image $dX_b(T_b B)$ where $X$ is a section of $M$ with $X(b) = q$ and $D_YX = 0$ for all $Y \in T_b B$. It is a smooth distribution field of subspaces $\cH_q\subset T_qM$ which are horizontal with respect to the fiber bundle $M\rightarrow B$, that is  $$\cH_q\oplus \cV_q= T_qM, \text{ for every } q\in M.$$

To distinguish it from the contact distribution $\chor$ introduced in \ref{ContactDistribution} we will be referring to the distribution $\cH$ as Ehresmann-horizontal.  

\subsubsection{Intrinsic description of Propagation of Neural Activity}
    One should note that we introduced contact-horizontal and vertical distributions $\chor$ and $\cver$, as well as  Ehresmann-horizontal and Ehresmann-horizontal sub-bundle $\cH$ and $\cV$, providing $TM$ with two distinct splittings,
    \begin{equation*}
        TM=V\oplus \chor \text{ and }TM=\cH\oplus \cV.
    \end{equation*} 
We will now use these two splittings to provide an intrinsic description of the directions of propagation of the neural activity, which are described in coordinates in \cite{FunctArch}. To provide a formal characterization we need the following two lemmas.

\begin{lem}
    The intersection of the horizonal distributions  
    $\cL=\chor\cap\cH$ is a line subbundle of $TM$. Its fiber defines a privileged line $\cL_q= H_q\cap\cH_q$ at each tangent space $T_qM$.
\end{lem}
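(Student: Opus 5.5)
We argue by linear algebra in each tangent space $T_qM$, combined with a transversality fact that prevents the two horizontal planes from coinciding. Both $\chor$ and $\cH$ are rank two subbundles of the rank three bundle $TM$. Hence at every $q\in M$ the dimension formula gives
\[
\dim(\chor_q\cap\cH_q)=\dim\chor_q+\dim\cH_q-\dim(\chor_q+\cH_q)=4-\dim(\chor_q+\cH_q)\in\{1,2\},
\]
and the intersection is a line exactly when $\chor_q\neq\cH_q$. Smoothness then follows from the standard fact that the intersection of two transversal smooth distributions is a smooth distribution: locally $\cL=\ker a_J\cap\ker\omega$, where $\omega$ is a local 1-form with $\cH=\ker\omega$.

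We first exclude $\chor_q=\cH_q$, and there are two arguments. The first is that $\cH$ is transverse to the fibers, so $\cH_q\cap\cV_q=0$. The fibers of $M\to B$ are Legendrian, so $\cV_q=\ker d\pi_q\subset\chor_q$; indeed $a_J=(d\pi)^*(-J^*q)$ vanishes on vertical vectors. Since $\chor_q$ contains the nonzero vertical line $\cV_q$ and $\cH_q$ does not, the two planes differ. This gives $\dim\cL_q=1$ at every point, with no computation. The second argument, a sanity check, is that $\chor$ is contact and hence not integrable, while in the flat case $\cH$ is integrable; this only works when the curvature of $B$ vanishes, so I would rely on the first.

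For smoothness, I would take a local section, for example the gauge frame $dx$ with coordinates $(x,y,\theta)$. Then I write $\cH=\ker(d\theta+\omega_{LC})$, where $\omega_{LC}$ is the pulled-back Levi-Civita connection form, and $\chor=\ker a_J$. The 1-forms $a_J$ and $d\theta+\omega_{LC}$ are pointwise linearly independent, because the first kills $\partial_\theta$ and the second does not. Their wedge product is therefore nonvanishing, and the kernel of the pair is spanned by the smooth field obtained from $(a_J\wedge(d\theta+\omega_{LC}))$ contracted with the volume form, or explicitly by $\cos\theta\,\partial_x+\sin\theta\,\partial_y$ plus a $\partial_\theta$ correction. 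This exhibits $\cL$ as a smooth line subbundle, so $\cL_q$ is a privileged line in $T_qM$.

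The only delicate point is verifying cleanly that the fibers are Legendrian, that is $\cV\subset\chor$. This is immediate from the definition of $a_J$ as a pullback by $d\pi$, so I expect no real obstacle.
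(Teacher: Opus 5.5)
Your proof is correct and follows the same route as the paper: you show the two rank-two distributions are distinct, hence transverse, because the fibers are Legendrian ($\cV\subset\chor$) while $\cH\cap\cV=0$, so they meet in a line at each point. Your local-coframe argument ($a_J\wedge(d\theta+\omega_{LC})\neq 0$) makes explicit the smoothness of $\cL$ that the paper only asserts, although the formula $\cos\theta\,\partial_x+\sin\theta\,\partial_y+\dots$ holds literally only in an orthonormal gauge, such as the flat case.
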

\begin{proof}
    The vertical subbundle $\cV$ is contained in the contact distribution $H$ and since both subbundles $\chor$ and $\cH$ have codimension 1,  they are transverse in $TM$. Consequently, their fiberwise inersection $\chor\cap\cH$ is a subbundle of $TM$ whose fiber is  the line $\cL_q=\chor_q\cap \cH_q$.
\end{proof}

In the next lemma we see that the tangent bundle $TM$ can be expressed as a direct sum of three intrinsic sub-bundles, with the same base as $M$.
\begin{lem} The contact distribution $\chor$ splits to  $$\chor = (H\cap\cH) \oplus \cV.$$
     Moreover, the tangent bundle $TM$ is isomorphic to the Whitney sum
    $$TM \simeq (H\cap\cH) \oplus \cV \oplus V$$

\end{lem}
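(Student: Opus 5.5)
The plan is to work fiberwise, with dimension counts and transversality, and then use the constant-rank subbundle facts already established.

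Fix $q\in M$ and work in $T_qM$, which has dimension $3$. The contact distribution $\chor_q$ and the Ehresmann distribution $\cH_q$ are both $2$-dimensional. The vertical space $\cV_q=\ker d\pi_q$ is $1$-dimensional.

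\textbf{Step 1: $\cV\subset\chor$.} This was already used in the proof of the previous lemma. I would verify it directly: $a_J=\lambda_J|_M$ and $(\lambda_J)_q=(d\pi)^*(-J^*q)$. So for any $v\in\cV_q$ we get $a_J(v)=-J^*q(d\pi(v))=0$, since $d\pi(v)=0$. Hence $\cV_q\subset\ker a_J=\chor_q$.

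\textbf{Step 2: $\cL_q\cap\cV_q=0$.} By the previous lemma, $\cL_q=\chor_q\cap\cH_q$ is a line. Since $\cL_q\subset\cH_q$ and $\cH_q\cap\cV_q=0$ (because $\cH_q\oplus\cV_q=T_qM$), the sum $\cL_q+\cV_q$ is direct. It is $2$-dimensional and contained in $\chor_q$ by Step 1. Since $\dim\chor_q=2$, it equals $\chor_q$. This gives $\chor_q=(\chor_q\cap\cH_q)\oplus\cV_q$ at every point.

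\textbf{Step 3: passing to bundles.} I expect this to be the only delicate point, and it is mild. I need each summand to be a smooth subbundle, so that the pointwise decomposition is a Whitney sum. $\cV$ is the vertical bundle of $\pi$. $\cL$ is a smooth line bundle by the previous lemma: it is the intersection of two transverse smooth corank-one distributions, locally the common kernel of $a_J$ and a local defining form of $\cH$, with constant rank. So the addition map $\cL\oplus\cV\to\chor$ is a smooth bundle map that is a fiberwise isomorphism, hence a bundle isomorphism.

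\textbf{Step 4: the full splitting of $TM$.} From the earlier splitting $TM=V\oplus\chor$, with $V=\chor^{\perp_{g_M}}$ a smooth line bundle, substituting Step 3 gives
$$TM\simeq (\chor\cap\cH)\oplus\cV\oplus V.$$
This is an internal direct sum of subbundles, so it is isomorphic to the Whitney sum.
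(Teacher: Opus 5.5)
Your proof is correct and is essentially the paper's argument. The paper phrases it via the short exact sequence $0\to\chor\cap\cH\to\chor\to\chor/(\chor\cap\cH)\to 0$: it identifies the quotient with $\cV$ and splits the sequence by the inclusion $\cV\hookrightarrow\chor$, which rests on exactly your facts $\cV\subset\chor$ and $\cV\cap(\chor\cap\cH)=0$ together with the rank count you make explicit; your Step 4 also supplies the $TM\simeq(\chor\cap\cH)\oplus\cV\oplus V$ part that the paper leaves implicit.
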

\begin{proof}
    The contact distribution $\chor$ fits into the short exact sequence of vector bundles
    \begin{equation*}
        0\rightarrow\chor\cap \cH\rightarrow \chor\rightarrow \chor/ \chor \cap \cH\rightarrow 0.
    \end{equation*}
 Now, since the vertical subbundle $\cV\subset TM$ is trivial and $\cV\cap(\chor\cap \cH)=\{0\}$ by definition, it follows that any global section of $\cV$ induces a global section on $\chor/\chor\cap \cH$ and therefore we can identify $\cV$ with the quotient bundle  $\chor/\chor\cap \cH$ and obtain the short exact sequence
 \begin{equation*}
     0\rightarrow\chor\cap \cH\rightarrow \chor\rightarrow \cV \rightarrow 0.
 \end{equation*}
 Finally, this short exact sequence splits since $\cV$ is a subbundle of $\chor$, via the bundle map $\cV\rightarrow \chor$ and therefore $\chor=\cV\oplus (\chor\cap \cH) $.
\end{proof}

As a result,  this splitting allows us to introduce an intrinsic orthogonal basis of the contact-horizontal plane. Using the induced metric $g_M$ on $M$, we can call a unitary vector any vector $X$ such that and $g_M(X,X)=1$, and introduce an orthonormal basis as follows.
\begin{defn}\label{IntrinsicDirections}
    We introduce a basis 
$$(X_1, X_2)$$
of the contact-horizontal distribution $\chor$ in the following way. 
 $X_1$ is the unique unitary vector field such that $X_1(q)\in \cH_q \cap \chor$ while $X_2$ is the unique unitary vector field such that $X_2(q)\in {\cV}_q $.
\end{defn}
We used the same notation as in 
\eqref{vectorfieldscoord} since it is easy to show that this is a coordinate free representation of the same vector fields. The key property of this splitting is that it is unique and only dependent on the metric connection and the contact structure. Consequently, it induces a unique natural choice of coordinates, and proofs in this choice of coordinates provide intrinsically determined objects. 

\subsubsection{Singed Curvature in Horizontal Lifts of Retinal Contours}

Recall that the horizontal lift of a retinal curve $\gamma: I\rightarrow B$ with respect to the contact distribution $H$ is a section $\Gamma: \gamma\rightarrow M$ along the path $\gamma$.
Recall also that the velocity vector $\dot{\Gamma}$ of the lifted curve $\Gamma:I\to M$  at each $t\in I\subset \bR$
is  the pushforward 
$\dot \Gamma (t) = \Gamma_{*} \left( \frac{d}{dt} \right)  $ acting on functions in $f\in C^\infty(M,\bR)$ by $$\dot{\Gamma}(t)f=\frac{d}{dt}(f\circ\Gamma)(t), \forall f\in C^\infty(M,\bR)$$ and therefore it is a vector in $T_{\Gamma(t)}M$. We will now relate $\dot{\Gamma}$
 to the signed curvature of the original curve. As in the previous section, we consider $\gamma(0)=b$ and $u(t)=\frac{\dot{\gamma}(t)}{\|\dot{\gamma}(t)\|}$. 

Let   $N(t)$ be the unique normal vector field such that $(u(t), N(t))$ is an oriented orthonormal basis of $T_{\gamma(t)}\bR^2$, we recall that the \textit{ signed curvature } $\kappa_N$ of $\gamma$ at $\gamma(t)$ is given by 
$$\kappa_N(t)=\langle D_tu(t), N(t)\rangle.$$

By differentiating $\|u(t)\|^2=1$ we obtain a relation between the signed curvature of  the curve $\gamma$ 
and the 
the covariant derivative of $u$: 
\begin{equation}\label{covcurv}
   D_tu=\kappa_N(t)N(t).
\end{equation} 

We recall the differential-geometric notions needed for the following proposition.
The covariant derivative along $\gamma$  maps unit vectors of $B$ to vertical vectors of $T(\bS B)$, namely $D_{t_0}^\bS: \bS_{\gamma(t)}B\rightarrow \cV_{{u}(t)}\bS B$.
The  connection map (vertical projection) $K:TM\rightarrow \cV$ sends each $T_qM$ to the vertical subspace $\cV_q$ by projecting along $\cH_q$.  The vector space structure on each fiber $T_b^*B$ yields a natural
isomorphisms $$Vert_q:T_b^*B\rightarrow \cV_q, ~a\mapsto \frac{d}{dt}(a+ tq)_|{_{t=0}}, ~ q\in T_b^* B.$$
Finally, the differential $d\pi_q:T_qM\rightarrow T_bB$ of the projection $M\rightarrow B$  restricts to a linear isomorphism $(d\pi_q)_{|_H}:\cH_q\rightarrow T_b B$, whose inverse is the \textit{horizontal lift} $Hor_q: T_b B\rightarrow H_q$.

\begin{prop}\label{decomposition}
    Let $\gamma:I \to B$ be a retinal curve in $B$ and $\Gamma: I\rightarrow M$ its horizontal lift with respect to the contact distribution $H=ker(a_J)$. Then
    \begin{equation}
        \dot{\Gamma}= |\dot\gamma| X_1+ \kappa_N X_2
    \end{equation}
    where $\kappa_N$ is the curvature of $\gamma$.
\end{prop}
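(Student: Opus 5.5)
Since the lift $\Gamma$ is horizontal for $\chor=\ker(a_J)$, the velocity $\dot\Gamma(t)$ lies in $\chor_{\Gamma(t)}$. By the splitting lemma, $\chor=(\chor\cap\cH)\oplus\cV$, and $(X_1,X_2)$ is a basis adapted to this splitting (Definition \ref{IntrinsicDirections}). So we may write $\dot\Gamma=\alpha X_1+\beta X_2$ and identify the two coefficients separately. The horizontal part is detected by $d\pi$, which kills $\cV$. The vertical part is detected by the connection map $K$, which kills $\cH\supset \chor\cap\cH$.

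\textbf{First coefficient.} Since $\pi\circ\Gamma=\gamma$, we have $d\pi(\dot\Gamma)=\dot\gamma=|\dot\gamma|\,u$. Because $X_2\in\cV=\ker d\pi$, this gives $\alpha\, d\pi(X_1)=|\dot\gamma|u$. It remains to check that $d\pi(X_1(\Gamma(t)))=u(t)$, i.e. a unit vector with the right sign. Write $X_1(q)=Hor_q(w)$. The condition $a_J(X_1)=0$ means $(-J^*p)(w)=0$, i.e. $p(Jw)=0$. By the lemma identifying $\Gamma=\flat(u)$, at $q=\Gamma(t)$ this says $g(u,Jw)=0$, so $w$ is parallel to $u$.

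For the norm, I will use that the induced metric $g_M$ (Sasaki-type) makes $d\pi|_{\cH}$ an isometry, so $|w|=1$. For the sign, the unit field $X_1$ should be chosen so that $d\pi X_1=\sharp p$, consistent with the coordinate expression \eqref{vectorfieldscoord}. Hence $\alpha=|\dot\gamma|$.

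\textbf{Second coefficient.} Here I apply $K$, giving $K(\dot\Gamma)=\beta X_2$. Since $\Gamma=\flat\circ u$ and the Levi-Civita connection commutes with $\flat$, we get $K(\dot\Gamma)=Vert_{\Gamma}\big((D_t u)^\flat\big)$. By \eqref{covcurv} this equals $\kappa_N\,Vert_\Gamma(N^\flat)$. Since $N^\flat$ is a unit covector orthogonal to $\Gamma=u^\flat$, $Vert_\Gamma(N^\flat)$ is a unit vector tangent to the fiber circle. It should therefore equal $X_2(\Gamma)$ with the orientation convention of the $\bS^1$-action, whence $\beta=\kappa_N$.

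\textbf{Main obstacle.} The delicate point is not the decomposition itself but the normalization and sign conventions. One must verify that $g_M$ restricted to $\cV$ makes $Vert_q(N^\flat)$ unit length, and that the positive direction of the fiber (via $\theta$, i.e. $J^*$) matches $N=Ju$ rather than $-Ju$. I would settle both by a short check in gauge coordinates on $\bR^2$. There $\Gamma=(x(t),y(t),\theta(t))$ with $\dot\theta=\kappa_N|\dot\gamma|$ (curvature with respect to arclength scaled appropriately), so the formula reduces to the known coordinate statement. A tensorial, coordinate-free identity checked in one normal frame then holds in general. If parametrization is not by arclength, the vertical component acquires a factor $|\dot\gamma|$; I would state the result for arclength parametrization, or interpret $\kappa_N$ as $\langle D_tu,N\rangle$ along the given parameter, as the paper defines it, in which case no extra factor appears.
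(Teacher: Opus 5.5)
Your proposal is correct and follows essentially the same route as the paper: you split $\dot\Gamma$ along the Ehresmann connection, and you read off the vertical part through the connection map $K$ as the vertical lift of $\flat(D_t u)=\kappa_N N^\flat$, using that $\flat$ commutes with covariant differentiation. The horizontal part is the horizontal lift of $\dot\gamma$, normalized by the isometry property of $g_M$. Your explicit check, via the contact condition, that $d\pi X_1$ is parallel to $u$, and your care with the signs of $X_1$ and $X_2$, fill in points the paper leaves implicit, since it only matches norms and appeals to the ``unique unitary'' vector.
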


\begin{proof}
The contact horizontal lift $\Gamma$ of a curve $\gamma$ is a section $\Gamma:\gamma\rightarrow M$ along $\gamma$ which for every $t\in I$ is given by $ \gamma(t)\mapsto \flat(u(t))$. Moreover, its total differential $\dot{\Gamma}$ is a section $\dot{\Gamma}: \Gamma\rightarrow TM$ of $TM\rightarrow M$ along $\Gamma$ which for every $t$ is given by $\dot{\Gamma}(t)=\Gamma_*(\partial_t)$ and can be written with respect to the vertical projection $K$ and the Horizontal lift $Hor_\gamma$ as
$$\dot{\Gamma}(t)= Hor_{\Gamma(t)}(\gamma_*(\partial_t))+K(\Gamma_*(\partial_t)).$$
But for the section $\Gamma$ along the path $\gamma$, the vertical component of its velocity vector field $\dot{\Gamma}$ is equal to the covariant derivative of $\Gamma$ along $\gamma(t)$, namely
$$K(\Gamma_*(\partial_t))=D^\bS_t(\Gamma)=D^\bS_t(\flat u) \in V_{\Gamma(t)}.$$

We apply the linear isomorphism $Vert_\Gamma^{-1}:V_\Gamma\rightarrow T_\gamma^*B$ to identify $D^\bS_t(u)$ to the covariant derivative of $\flat u$ along $\gamma$, as a section of the vector bundle $T^*B$. 
Moreover, since the musical isomorphism and the covariant derivative commute, we obtain
$$K(\Gamma_*(\partial_t))=\flat(D^\bS_t(u)),$$
and by the definition of signed curvature the vertical component becomes $$K(\Gamma_*(\partial_t))=\flat(D_tu)=\flat(\kappa_n(t)N(t)).$$
Finally, by the definition of the metric $g_M$ on $M$ we have that \begin{align*}g_M(K(\Gamma_*(\partial_t)),K(\Gamma_*(\partial_t)))&= g^*(\flat (\kappa_N(t)N(t)), \flat (\kappa_N(t)N(t)))\\&=\kappa_N(t)g(N(t),N(t))=\kappa_N^2(t)\end{align*} and since $X_2$ is the unique vertical unitary vector in $V_{\Gamma(t)}$ we conclude that  $$K(\Gamma_*(\partial_t))=\kappa_N(t)X_2.$$ 

Similarly, for the horizontal component we know that $Hor_{\Gamma(t)}(\gamma_*(\partial_t))\in \cH_{\Gamma(t)}\cap H_{\Gamma(t)}$ and $g_M(Hor_{\Gamma(t)}(\gamma_*(\partial_t)), Hor_{\Gamma(t)}(\gamma_*(\partial_t)))=g(\gamma_*(\partial_t), \gamma_*(\partial_t))=\|\dot{\gamma}\|$ and since $X_1$ is the unique unitary vector in $\cH_{\Gamma(t)}\cap H_{\Gamma(t)}$ for every $t$,
$$Hor_{\Gamma(t)}(\gamma_*(\partial_t))=\|\dot{\gamma}(t)\|X_1.$$
\end{proof}

The relation between the contact and connection horizontal distributions established above is important for applications in visual perception, as it further unifies two modeling frameworks in the literature: the fiber bundle structure of the cortex endowed with a connection represents the hierarchical structure of the cortex, and the contact structure represents the long-range connectivity. 

It also yields a strong characterization of contact horizontal lifted curves, which are the only curves compatible with cortical connectivity. In particular, considering $B=\bR^2$, as in \cite{Cuspless,Cuspless2}, the  integral curves of vector fields with a non-trivial component along $X_1$ are regular and have regular projections onto $\mathbb{R}^2$. This provides an intrinsic characterization of \emph{cuspless geodesics} used in these works, namely geodesics in $SE(2)$ whose projections onto $\mathbb{R}^2$ have no cusps. Such curves were used in  to model association fields underlying perceptual organization in psychophysics.

\subsection{Curvature Selective Cells: Lifting from M to an affine sub-bundle of T*M}\label{CurvatureSpace}

 In previous sections, 
 we 
 established that the signed curvature $\kappa_N$ of a
 retinal curve $\gamma$ can be located in the vertical subbundle of $TM$. We will assume that curvature selective cells work in a similar manner to the orientation selective cells. Thus, while the feature of interest is located in the tangent bundle the  cells in order to extract said feature  should be located
 in the dual bundle. To capture this mechanism,  we will now show how to lift  $\Gamma$ to a curve $G$ on the appropriate sub-bundle of $T^*M$. Through this lift we obtain the neurons reacting to the given stimulus. The argument will be exactly analogous to the  lift, from $B$ to $M=\bS(T^*B)$ of $T^*B$, used to model orientation detection, in order to strongly underline the modularity of the cortical areas.

We start by imposing the differential constraint coming from the Legendrian lift from $B$ to $M$, namely we will only consider tangent vectors in  the contact-horizontal distribution $\chor=ker(a_J)$, and we split accordingly tangent and cotangent space. We recall, that the  connection on $(M,g_M)$ induces a split of the short exact sequence 
\begin{equation*}
    0\rightarrow\pi^*(T^*\bR^2)\rightarrow T^*M\rightarrow V^*\rightarrow 0,
\end{equation*}
 thus $V^*$ is a subbundle of $T^*M$ identified to the annihilator of $H\subset TM$, namely
 \begin{equation*}
   \cver^* \simeq  null(\chor):=\{Q\in \Omega(M): Q(X)=0 \text{ for every }X\in \chor\}
. \end{equation*}
Similarly, we consider $H^*\simeq null(V)$ such that 
$$T^*M = \cver^*  \oplus \chor^* .$$

We will denote by  $\Pi:T^*(T^*M)\rightarrow T^*M$ the projection onto $T^*M$ and  $\lambda_{T^*M}$ the tautological 1-form on $T^*M$,  namely $$({\lambda_{T^*M}})_{Q}(W)= (d\Pi)^*(Q)(W), \;\;W\in T_Q(T^*M).$$ 
 We consider its restriction to $H^*$
 \begin{equation}   a_{H^*}:=(\lambda_{T^*M})|_{H^* }
\end{equation}
which induces the singular distribution of subspaces of $T_qH^*$ given by
\begin{equation}
    ker({a_{H^*}}_q).
\end{equation}
Next, we will determine a submanifold of $H^*$ where the restriction of $a_{H^*}$ is non-vanishing and introduce a horizontal lift of a curve $\Gamma\subset M$ with respect to the distribution induced by the restriction of $a_{H^*}$.

We already noted in the previous section that it is necessary to use together the contact-horizontal distribution $\chor$ and Ehressmann-horizontal one $\Ehor$ to obtain a coordinate independent and intrinsic characterization of the  directions in the tangent bundle. In perfect analogy we will consider  the dual distribution in the dual space. 

The connection on $M$ also induces a split of the short exact sequence 
\begin{equation*}
    0\rightarrow\pi^*(T^*\bR^2)\rightarrow T^*M\rightarrow \Ever^*\rightarrow 0,
\end{equation*}
 thus $\Ever^*$ is a subbundle of $T^*M$ identified to the annihilator of $\Ehor\subset TM$, namely
 \begin{equation*}
   \Ever^*\simeq  null(\Ehor):=\{Q\in \Omega(M): Q(X)=0 \text{ for every }X\in \Ehor\}
. \end{equation*}
Since at every $q\in M$, $V_q$ is the orthogonal complement of $H_q$, and $\cH_q$ is the orthogonal complement of $\cV_q\subset H_q$, then $V_q\subset\cH_q$ and $null(\cH_q)\subset null(V_q)$
so that we have, 
$$\chor^*  = \Ever^* \oplus   ({\Ehor}^* \cap {\chor}^*).$$

Let us consider a curve $\Gamma: I\rightarrow M,$ which is the Legendrian lift of a curve 
$\gamma: I\rightarrow B$
and for every $t$ let us consider the set 
\begin{equation}
S= \{(q,P)\in \chor^*: q=\Gamma(t) \text{ and } \dot \Gamma(t) \in \ker(P)\},
\end{equation}
This condition does not uniquely determine the covector $P$, since the form $P$ is defined up to a constant. 
We need to impose a normalization condition on $P$ in such a way to univocally identify it, while $P$ extracts the curvature. Thanks to Proposition \ref{decomposition}, we know that the horizontal component $\dot{\Gamma}\cap({\Ehor} \cap {\chor})$  of a lifted regular curve never vanishes, so that we can impose the following normalization.

\begin{defn}\label{liftM}
  Given  a curve  $\Gamma: I\to M$, lifting of a curve 
    $\gamma: I\to \R^2,$ we call lifting of $\Gamma$
  the curve $G: I \to  \chor^*$ such that for every $t$, 
   $G(t)$ is the unique covector in $S$ such that $G(t)(X_2)=1$ (whose projection in ${\Ever}^*$ is unitary). 
\end{defn}
\begin{rem}
  Note that the normalization $G(t)(X_2)=1$ is an affine condition that depends on the choice of frame $\{X_1,X_2\}$ and therefore on the metric connection.  The appropriate target space for curvature detecting cells should reflect this property.
\end{rem}

 We will show that the lifting in the previous definition is in fact a lifting to a curve in a special affine subbundle of $H^*$. More specifically,  we consider the subset of $H$, $$A=\{X\in H: X-K(X)= X_1\}$$
 which together with the Ehressmann vertical subbundle $\cV\subset H$, and the distinguished non-vanishing section $X_2$  constitute a special affine bundle $\zeta: A\rightarrow M$ over the base manifold $M$ with model vector bundle $\cV\rightarrow M$. We consider the fiber bundle $A^\dagger\rightarrow M$ to be the fiber bundle over $M$ with fiber $A^\dagger_q$ being the set of all affine functions $\phi: A_q\rightarrow \bR$. The dimension of the fiber  $A^\dagger$ is greater by 1 than the dimension of $A_q$. More precisely, the vector bundle $A^\dagger$ is isomorphic to $\cV^*\oplus \bR$.  By 
$1_A$ we will denote the section of $A^\dagger$ which for every $q\in M$ is  the constant function on $A_q$ equal to 1 while by $0_A$ the section of $A^\dagger$ which is constantly equal to 0 . 
\begin{lem}
    There exists a vector bundle isomorphism between $A^\dagger$ and $H^*$.
\end{lem}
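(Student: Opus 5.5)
The plan is to write down an explicit fiberwise linear map and check that it is bijective and smooth. Over $q\in M$ the fiber $A_q$ is the affine line $X_1(q)+\cV_q=\{X_1(q)+sX_2(q): s\in\bR\}$, since $\cV_q$ is spanned by the unit vector $X_2(q)$ (Definition \ref{IntrinsicDirections}). The fiber $\chor^*_q$ is the dual of the $2$-dimensional space $\chor_q$, which by the splitting lemma is $\chor_q=(\chor_q\cap\cH_q)\oplus\cV_q$ with basis $(X_1(q),X_2(q))$. We define
\[
\Phi_q:\chor^*_q\to A^\dagger_q,\qquad \Phi_q(P)=P|_{A_q},
\]
so each covector on $\chor_q$ is sent to its restriction to the affine line $A_q\subset\chor_q$. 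A linear functional restricted to an affine subspace is an affine function, so $\Phi_q(P)\in A^\dagger_q$. Moreover $\Phi_q$ is linear in $P$, because the vector space structure of $A^\dagger_q$ is pointwise addition and scalar multiplication of functions.

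Next we prove that $\Phi_q$ is bijective. Every affine function $\phi$ on $A_q$ is determined by two numbers, $c=\phi(X_1)$ and the slope $m$ of its linear part on the model space $\cV_q$, so that $\phi(X_1+sX_2)=c+ms$. This gives $A^\dagger_q\cong\bR\oplus\cV_q^*$, the identification stated before the lemma. The inverse of $\Phi_q$ sends $\phi$ to the unique $P\in\chor^*_q$ with $P(X_1)=c$ and $P(X_2)=m$. Indeed, linearity gives $P(X_1+sX_2)=c+ms=\phi(X_1+sX_2)$, and a covector vanishing on $A_q$ vanishes on $X_1$ and on $X_2$, hence is zero. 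Dimensions also match: both fibers have dimension $2$.

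For smoothness we use the global smooth frame $(X_1,X_2)$ of $\chor$ and its dual coframe $(\theta^1,\theta^2)$ of $\chor^*$. In these coordinates $\Phi$ is the identity on coefficients, $P=c\,\theta^1+m\,\theta^2\mapsto(c,m)$. So $\Phi$ and $\Phi^{-1}$ are smooth bundle maps covering the identity of $M$, and $\Phi$ is a vector bundle isomorphism. Under $\Phi$, the section $1_A$ corresponds to $\theta^1$ and $0_A$ corresponds to the zero section. The normalization $G(t)(X_2)=1$ in Definition \ref{liftM} becomes the condition that the linear part equals the distinguished element dual to $X_2$, which is what the later affine-bundle description needs.

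The main obstacle is intrinsic well-definedness rather than the computation. We have to check that $X_1$ and $X_2$ are globally defined smooth sections: unit, with $X_1\in\chor\cap\cH$ and $X_2\in\cV$. The sign ambiguity of these unit vectors must be fixed by the orientations of $B$ and of the fibers. If a global choice were problematic, we would instead note that $\Phi_q(P)=P|_{A_q}$ uses only $A$, which depends only on the projection $X\mapsto X-K(X)$ and the section $X_1$. Local frames would then suffice to prove smoothness, and the isomorphism would not depend on the choice.
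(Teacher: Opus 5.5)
Your proof is correct, and it takes a different, intrinsic route from the paper. The paper works only in the global frame: it sets $\varPhi(q,bX_1^\vee+aX_2^\vee)=(q,b\,0_A+a\,1_A)$ and declares the obvious inverse.

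Read literally, with $0_A$ the zero section, that formula discards the coefficient $b$. So it is not injective, and the stated inverse is not well defined. It also pairs $X_2^\vee$ with the constant function $1_A$, whose linear part vanishes. This is incompatible with the later definition of $A^\star$ (affine functions whose linear part sends $X_2$ to $1$) used in Theorem \ref{CurvatureDetection}.

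Your map $P\mapsto P|_{A_q}$ is canonical and is evidently what the paper intends. It sends $X_1^\vee$ to $1_A$ and $X_2^\vee$ to the affine coordinate $s$ on $A_q=X_1+\bR X_2$. Hence $(P|_{A_q})^L(X_2)=P(X_2)$ and $\varPhi^{-1}(A^\star)=\{P\in\chor^*: P(X_2)=1\}$, which is exactly the normalization of Definition \ref{liftM}. Your remark that $1_A$ corresponds to the first coframe element is the correct matching; the paper's is not.

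The paper's frame-coefficient approach, once corrected to $bX_1^\vee+aX_2^\vee\mapsto b\,1_A+a\,s$, is just your map written in coordinates. What your version buys is a bijectivity argument that does not rest on bookkeeping, plus a transparent link to $A^\star$.

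Your worry about global frames is harmless. At a point $q\in M$ represented by the unit covector $p$, $X_1(q)$ is the $\cH$-horizontal lift of $\sharp p$. It lies in $\ker a_J$ because $p(J\sharp p)=g(\sharp p,J\sharp p)=0$. $X_2(q)$ is the unit fiber tangent oriented by the $\bS^1$-action, so both sections are globally and canonically defined.
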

\begin{proof}
The global frame $(X_1,X_2)$ of $H$ induces a dual global frame $(X_1^\vee, X_2^\vee)$ of $H^*$, where $X_2^\vee$ is a section of $\cV^*$. We define the bundle map \begin{align*}
\varPhi: H^*\rightarrow A^\dagger
   , (q, b(q) X_1^\vee+a(q)X_2^\vee)\mapsto (q, b(q)0_A+a(q)1_A).
\end{align*} 

The map is fiberwise linear by definition and its inverse is $$\varPhi^{-1}(q, b(q)0_A+a(q)1_A)=(q, b(q)X_1^\vee+a(q)X_2^\vee).$$
\end{proof}

In the category of special affine bundles there is a canonical notion of duality \cite{AffineBundles}. The special affine dual $A^\star$ of the special affine bundle $(A, X_2)$ is an affine
subbundle in $A^\dagger$ that consists of those affine functions on fibers of $A$ the linear
part of which maps $X_2$ to 1. The bundle $A^\star$ is also a special affine bundle with (fiberwise) model vector space, 
$$Vec(A^\star)\simeq\{\phi:A\rightarrow \bR, \phi \text{ affine map s.t.}\phi^L(1_A)=0\},$$
where $\phi^L$ denotes the linear part of the affine map $\phi$.

We are now ready to  show that the lift introduced in Definition \ref{liftM} can be reinterpreted in the language of special affine bundles and consequently that the singed curvature of a planar curve is parametrized by a special affine subbundle of contact horizontal covectors of $M$.
 \begin{thm}\label{CurvatureDetection}
     Let $G: I\rightarrow H^*$ be the lift of a horizontal curve  $\Gamma: I \rightarrow M$ which is the lift of a regular curve $\gamma: I\rightarrow B$, as in Definition \ref{liftM}. Then, the lift $G$ is a section of the special affine subbbundle $\varPhi^{-1}(A^\star)|_\Gamma\rightarrow \Gamma\subset M$. Moreover, if $\gamma:I\rightarrow B$ and $\tilde{\gamma}:I\rightarrow B$ are two planar curves with lifts $\Gamma, G$ and $\tilde{\Gamma}, \tilde{G}$ respectively, then $G(t_0)=\tilde{G}(t_0)$ if and only if $$\gamma(t_0)=\tilde{\gamma}(t_0), ~\dot{\gamma}(t_0)=\dot{\tilde{\gamma}}(t_0),~ \kappa_N\gamma(t_0)=\kappa_N\tilde{\gamma}(t_0).$$
 \end{thm}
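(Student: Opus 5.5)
The plan is to compute $G(t)$ explicitly in the dual frame $(X_1^\vee,X_2^\vee)$ of $H^*$ and then read off both claims. By Proposition \ref{decomposition}, $\dot\Gamma(t)=|\dot\gamma(t)|X_1+\kappa_N(t)X_2$. Write $G(t)=b(t)X_1^\vee+a(t)X_2^\vee$. The normalization $G(t)(X_2)=1$ of Definition \ref{liftM} forces $a(t)=1$. The constraint $\dot\Gamma(t)\in\ker G(t)$ then reads $b(t)|\dot\gamma(t)|+\kappa_N(t)=0$. Because $\gamma$ is regular, $|\dot\gamma|\neq0$, so the lift is well defined and unique, with
\begin{equation*}
G(t)=-\frac{\kappa_N(t)}{|\dot\gamma(t)|}\,X_1^\vee+X_2^\vee .
\end{equation*}

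For the first claim, apply $\varPhi$. This gives $\varPhi(G(t))=-\frac{\kappa_N}{|\dot\gamma|}0_A+1_A$ over $\Gamma(t)$. We must check that this affine function on $A_{\Gamma(t)}$ has linear part sending $X_2$ to $1$.

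Here one has to be careful about how $\varPhi$ is meant. The intended reading is that a covector $P\in H^*_q$ is restricted to the affine plane $A_q=X_1+\cV_q$. The resulting function $P|_{A_q}$ is affine, its constant part is $P(X_1)$, and its linear part (on the model space $\cV_q$) is $P|_{\cV_q}$. Under this reading the linear part of $G(t)|_{A}$ evaluated on $X_2$ is $G(t)(X_2)=a(t)=1$, so $G(t)\in\varPhi^{-1}(A^\star)$ fiberwise. Smoothness in $t$ follows from the smoothness of $\kappa_N$ and $|\dot\gamma|$, so $G$ is a section over $\Gamma$.

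This reinterpretation of $\varPhi$ as restriction to $A$ is the main subtle step. The notation $b0_A+a1_A$ in the lemma is ambiguous, and I would state explicitly that $X_1^\vee\leftrightarrow$ constant part and $X_2^\vee\leftrightarrow$ linear part evaluated on $X_2$. With this identification the defining condition of $A^\star$ is exactly the normalization $P(X_2)=1$.

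For the second claim, equality $G(t_0)=\tilde G(t_0)$ in $H^*$ means the base points agree, $\Gamma(t_0)=\tilde\Gamma(t_0)$, and the coefficients agree. Equality of base points in $M$ gives $\gamma(t_0)=\tilde\gamma(t_0)$ and, by the lemma $\Gamma=\flat u$, equal unit tangents. Equality of the $X_1^\vee$ coefficients gives $\kappa_N/|\dot\gamma|=\tilde\kappa_N/|\dot{\tilde\gamma}|$.

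The conditions in the statement hold exactly when the curves share position, velocity and curvature. Conversely, if they agree, the explicit formula immediately gives $G(t_0)=\tilde G(t_0)$.

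The forward direction only recovers the direction of $\dot\gamma$ and the ratio $\kappa_N/|\dot\gamma|$. To conclude equality of $\dot\gamma$ and of $\kappa_N$ themselves, I would assume (as is implicit) that the curves are arclength-parametrized, $|\dot\gamma|=1$, or else interpret the equivalence modulo reparametrization. This parametrization issue is the point I expect to need an explicit remark.
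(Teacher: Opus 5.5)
Your proposal is correct and follows essentially the same route as the paper's proof: write $G(t)$ in the dual frame $(X_1^\vee,X_2^\vee)$, use Proposition~\ref{decomposition} and the normalization $G(t)(X_2)=1$ to get the explicit coefficients, read off membership in $\varPhi^{-1}(A^\star)$ from the $X_2^\vee$-coefficient being $1$, and compare coefficients for the second claim; your restriction-to-$A_q$ reading is a sound way to make the paper's $\varPhi$ precise. Your sign $-\kappa_N/|\dot\gamma|$ is the correct one (the paper writes $+$), and your parametrization caveat is justified: the paper's proof tacitly uses the same speed $\|\dot\gamma(t_0)\|$ for both curves, whereas in general $G(t_0)$ only determines $\gamma(t_0)$, the direction of $\dot\gamma(t_0)$ and the ratio $\kappa_N/|\dot\gamma|$, which with $\kappa_N=\langle D_tu,N\rangle$ is precisely the arclength curvature.
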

\begin{proof}

We start by proving that for every $t$ the lift $G(t)$ lies in a special affine subspace of $H^*_{\Gamma(t)}$. Indeed, by definition $G(t)=(\Gamma(t), b(\Gamma(t))X_1^\vee+X_2^\vee)$ and since $b(\Gamma(t))X_1^\vee+X_2^\vee= \varPhi^{-1}(b(\Gamma(t)) 0_A+ 1_A)$ it follows that $G(t)\in \varPhi^{-1}(A^\star)_{\Gamma(t)}$ for every $t$. Now, for two distinct curves $\gamma$ and $\tilde\gamma$ we assume that for some $t_0$, $G(t_0)=\tilde G(t_0)$. The latter is true if and only if $\Gamma(t_0)=\tilde\Gamma(t_0)\text{ and }P(t_0)=\tilde{P}(t_0)$. From Proposition \ref{decomposition}
 follows that $P(t_0)= \frac{\kappa_N\gamma(t_0)}{\|\dot{\gamma}(t_0)\|}X_1^\vee+ X_2^\vee \text{ and }\tilde P(t_0)= \frac{\kappa_N\tilde\gamma(t_0)}{\|\dot{\gamma}(t_0)\|}X_1^\vee+ X_2^\vee $ and therefore $P(t_0)=\tilde{P}(t_0)$ if and only if $\kappa_N\gamma(t_0)=\kappa_N\tilde\gamma(t_0)$. \end{proof}

 Note that $G (t)$ originally belongs to $H^* \subset T^* M,$ which has dimension 5, but the normalization constraint that we imposed guaranties that  the dimension of the space where we find $G$ is 4. Clearly the normalization is not the only possible one as it depends on the metric connection.
 
Finally, by writing the 1-form $a_{H^*}$ with respect to the dual frame $X_1^\vee$, $X_2^\vee$ and the projection 
$\Pi: T^*(T^*M)\rightarrow T^*M$ $$a_{H^*}=d\Pi^*(\beta(q)X_1^\vee+ a(q) X_2^\vee)$$ we see that the restriction on $\varPhi^{-1}(A^\star)$ is a nowhere vanishing 1-form. Thus, we will denote by $\alpha$ the restriction of the tautological form $\lambda_{T^*M}$  on $\varPhi^{-1}(A^\star)$, 
$$\alpha:= \lambda_{T^*M}|_{\varPhi^{-1}(A^\star)}$$ and we will obtain the 
  the hyperplane   distribution $$\mathfrak{H}:=ker(\alpha)\subset T\varPhi^{-1}(A^\star)$$ on $\varPhi^{-1}(A^\star)$.

   Finally, we show that the distribution $\mathfrak{H}$ characterizes the lifts of \eqref{liftM}.
 \begin{prop} Let $\Gamma: I \rightarrow M$ be the contact horizontal lift of a regular curve $\gamma: I\rightarrow B$ and consider $\tilde{\Gamma}:I\rightarrow  \varPhi^{-1}(A^\star)$ such that $\Pi(\tilde{\Gamma})=\Gamma$, then $\tilde{\Gamma}$ is horizontal with respect to the hyperplane distribution $\mathfrak{H}$ if and only if it satisfies \eqref{liftM}.
 \end{prop}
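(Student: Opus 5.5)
The plan is to unwind the definition of the tautological form $\lambda_{T^*M}$ along a curve in $\varPhi^{-1}(A^\star)$ and reduce horizontality to the pointwise kernel condition in the definition of $S$. Write $\tilde\Gamma(t)=(\Gamma(t),P(t))$ with $P(t)\in \varPhi^{-1}(A^\star)_{\Gamma(t)}$. By the definition of $A^\star$ and of $\varPhi$, this means $P(t)=\beta(t)X_1^\vee+X_2^\vee$ for some function $\beta$. Equivalently, $P(t)\in H^*$ and $P(t)(X_2)=1$, since $X_2^\vee$ is dual to $X_2$. So the normalization in Definition \ref{liftM} is automatically built into the target space. The only content of \eqref{liftM} that remains to be checked is the condition $\dot\Gamma(t)\in\ker P(t)$.

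The key computation is the tautological identity
$$\alpha_{\tilde\Gamma(t)}\big(\dot{\tilde\Gamma}(t)\big)=P(t)\big(d\Pi(\dot{\tilde\Gamma}(t))\big)=P(t)\big(\dot\Gamma(t)\big),$$
using $\Pi\circ\tilde\Gamma=\Gamma$. Here I must be careful that $\alpha$ is the restriction of $\lambda_{T^*M}$ to the submanifold $\varPhi^{-1}(A^\star)\subset T^*M$. For a tangent vector $W$ to this submanifold, $\alpha(W)=Q(d\Pi W)$ with the inclusion understood. Hence $\tilde\Gamma$ is $\mathfrak H$-horizontal exactly when $P(t)(\dot\Gamma(t))=0$ for all $t$.

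Next I insert Proposition \ref{decomposition}: $\dot\Gamma=|\dot\gamma|X_1+\kappa_N X_2$. Then
$$P(t)(\dot\Gamma(t))=\beta(t)|\dot\gamma(t)|+\kappa_N(t).$$
Since $\gamma$ is regular, $|\dot\gamma|\neq 0$, so this vanishes if and only if $\beta=-\kappa_N/|\dot\gamma|$. This determines $P(t)$ uniquely in the affine fiber, and it is exactly the unique element of $S$ with $P(X_2)=1$, i.e. the lift $G$ of Definition \ref{liftM}.

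The two directions then follow. If $\tilde\Gamma$ is horizontal, then $P(t)\in S$ and is normalized, so $\tilde\Gamma=G$. Conversely, if $\tilde\Gamma=G$, then $P(t)(\dot\Gamma)=0$ by the definition of $S$, so $\alpha(\dot{\tilde\Gamma})=0$.

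The main obstacle I expect is bookkeeping rather than depth, and it has three parts. First, I must justify that the tautological form restricted to the affine subbundle still evaluates as $P(d\Pi W)$. Second, I must check that $\dot\Gamma\in H$, so that $P\in H^*$ can be evaluated on it; this holds since $\Gamma$ is Legendrian. Third, I must reconcile signs with Theorem \ref{CurvatureDetection}, where the coefficient was written as $+\kappa_N/|\dot\gamma|$; the sign convention there may differ, but it does not affect the uniqueness argument.
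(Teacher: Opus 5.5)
Your proof is correct and follows the paper's route. The tautological identity $\alpha_{\tilde\Gamma}(\dot{\tilde\Gamma})=P(\dot\Gamma)$ reduces horizontality to $\dot\Gamma\in\ker P$, and the normalization $P(X_2)=1$ is already built into $\varPhi^{-1}(A^\star)$; your extra use of Proposition \ref{decomposition} only makes explicit the uniqueness the paper compresses into ``which is equivalent to \eqref{liftM}'', and your sign $\beta=-\kappa_N/|\dot\gamma|$ is the one consistent with Proposition \ref{decomposition}, so the mismatch with Theorem \ref{CurvatureDetection} is a slip there and does not affect this argument.
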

 \begin{proof}
       Since $\alpha$ is the restriction of the tautological 1-form on $\varPhi^{-1}(A^\star)$ and for $\Pi: T^*(T^*M)\rightarrow T^*M $ we have 
     \begin{align*}
         \alpha_{\tilde{\Gamma}}(\dot{\tilde{\Gamma}})=d\Pi^*(\tilde{\Gamma})(\dot{\tilde{\Gamma}})=\tilde{\Gamma}(d\Pi \dot{\tilde{\Gamma}})=\tilde{\Gamma}(\dot{{\Gamma}}).
     \end{align*}
     Thus, $\tilde{\Gamma}$ is horizontal if and only if $\tilde{\Gamma}(\dot{\Gamma})=0$, which is equivalent to \eqref{liftM}. 
 \end{proof}

To conclude this chapter, for a moment we assume that $(B,g)=(\bR^2,g_\bE)$ and use gauge coordinates $(x,y,\theta) $ for a chart $U\subset M$. At each $H^*_q$, $X_1^\vee$ and $X_2^\vee$ induce linear coordinates $(b,a)$.  The expression of an element of $H^*$ in coordinates $(x,y,\theta, b, a) $ will be 
$$(\gamma(t), a d\theta + b (\cos(\theta) dx + \sin(\theta) dy)).$$
Due to the normalization condition, the expression of the lifting become 
$$(\gamma(t),  d\theta + k (\cos(\theta) dx + \sin(\theta) dy)),$$
where $k$
is the curvature of $\gamma.$
The model reduces to the Curvature Detection Model of Petitot, Citti and Sarti \cite{PetitotCurv} where in that sense the lift selects the curvature parameter $k$.

\section{ Curvature and 
Curvature Radii Space: Engel Structure and Associated Lie Algebra}\label{Radii}
In the previous chapter, we identified the space of Curvature Selective Complex Cells with the special affine bundle $(\varPhi^{-1}(A^\star),X_2^\vee)$.   
 If the curvature does not vanish, we can associate to each curve the  signed radius of the osculating circle.  The radius will be positive if the center belongs to the half line aligned with $N(t)$ at the point $\gamma(t)$, it will be negative otherwise. 
In this paragraph, we will formalize this approach and relate it to the Cartan Prolognation- the projectivization of the contact structure $(M,H^*)$.  We consider this space to represent the joint space of Absolute Curvature and Curvature Radii. Using the canonical Engel structure of the Cartan Prolongation we show that the distribution $\mathfrak{H}$ is Engel (Proposition \ref{Engel Pullback}). We show that the iterative brackets of a pair of generators of the Engel distribution generate the Lie algebra $\mathfrak{sim(2)}$ (Lemma \ref{LieAlgebra} ).
 
\subsection{Cartan Prolongation and Curvature Radii}
 We recall that, the projectivization $\bP H$ of the contact distribution $H\subset TM$
  is the fiber bundle over $M$ with typical fiber $\mathbb{P}_q H$ the lines of $H_q$. The projectivization $\mathbb{P}H$ inherits a canonical 2-dimensional plane distribution defined by declaring that a curve $(q(t),\ell(t))\in M\times\mathbb{P}^1$ is horizontal if and only if the derivative of the point of contact $\dot{q}(t)$ lies on the line $\ell(t)\in H_{q(t)}$ for every $t$.  Equivalently, if $\Pi_{\bP^1}: M\times \mathbb{P}^1\rightarrow M$ is the projection of the fiber bundle and $d\Pi_{\bP^1}: T( M\times \mathbb{P}^1)\rightarrow T( M) $ its differential, the distribution $\mathcal{D}$ is described as follows
\begin{equation}
    \mathcal{D}=\{d{\Pi_{\bP^1}^{-1}}_{(q,\ell)}(\ell_q): (q,\ell) \in  M\times \mathbb{P}^1 \}.
\end{equation}
The projectivization $\mathbb{P}H$ together with the distribution $\mathcal
{D}$ is the \textit{ prolongation} of $M$; for a more comprehensive treatment of prolongations, we refer to \cite{Montgomery}.

In the previous section we showed that there is a unique choice of linear coordinates in the space $H$ induced by $X_1$ and $X_2$. In this paragraph, we highlight the role of these coordinates. The vector fields $X_1$ and $X_2$  form linear coordinates on each contact plane $H_q$ and therefore a line $\ell_q \subset H_q$ can be expressed with respect to these vector fields as
\begin{equation}
    \ell_q= r(s) X_1(q)+ \kappa(s) X_2(q), ~ s>0.
\end{equation}

Using affine coordinates $(r,\kappa)$ on $H$ with respect to the frame $\{X_1,X_2\}$ and gauge coordinates $(x,y,\theta)$ on $M$, the distribution $\mathcal{D}$ is spanned by the vector fields
\begin{equation}\label{rVectors}
   \mathcal{X}= rcos(\theta)\partial_x+rsin(\theta)\partial_y+ \partial_\theta, ~ \mathcal{R}=\partial_r
\end{equation}
on the open submanifold $\mathbb{P}H\setminus \{\kappa = 0\}$ with coordinates $(x,y,\theta,r)$ and by 
\begin{equation}\label{KVectors}
     \mathcal{X}= cos(\theta)\partial_x+sin(\theta)\partial_y+ \kappa\partial_\theta, ~ \mathcal{K}=\partial_\kappa
\end{equation}
on the open submanifold $\mathbb{P}H\setminus \{r= 0\}$ with coordinates $(x,y,\theta,\kappa)$.

The affine chart $\bP H\setminus{\{\kappa=0\}}$ represents the absolute curvature space and the affine chart $\bP H\setminus{\{r=0\}}$ represents the curvature radii space, since in their intersection it holds that  $r=\frac{1}{\kappa}$.

\begin{rem}
    Note that when $B=\bR^2$ the distribution $\cD$ expressed in affine coordinates $(x,y,\theta,\kappa)\in \bP H \setminus\{r > 0\}$ is given by the condition 
    \begin{equation*}
        \dot{q}\in ker(d\theta+k(cos(\theta)dx+sin(\theta)dy))\cap ker(-sin(\theta)dx+cos(\theta)dy)
    \end{equation*}
    which is exactly the condition describing the Engel distribution introduced in \cite{PetitotCurv}.
\end{rem}

We use the projective line bundle $M\times \mathbb{P}^1$ with co-rank 2 distribution $\mathcal{D}$, which is the  Cartan prolongation of the orientation-position contact manifold $(M=\mathbb{S}(T^*B),H)$ to describe the Absolute Curvature-Curvature Radii space.  
One can work similarly for the Cartan prolongation of $(M, H^*)$.

\subsection{Engel Structure}
The prolongation $M\times \bP^1$ of the contact manifold $M$ together with the rank 2-distribution $\mathcal{D}$ is an Engel structure, \cite{Montgomery}. Engel structures form another class of nonintegrable distributions which
is closely related to contact structures. 

By definition, an Engel structure is a
smooth distribution $\mathcal{D}$ of rank 2 on a manifold  of dimension 4 which satisfies
the nonintegrability conditions
\begin{equation}
   rank (\cD+ [\mathcal{D},\mathcal{D}])=3~,~ rank(\cD+[\cD,\cD]+[\mathcal{D},[\mathcal{D},\mathcal{D}]])=4,
\end{equation}
 where $[\mathcal{D}, \mathcal{D}]$ consists of those tangent vectors which can be obtained by taking
commutators of local sections of $\mathcal{D}$. 
 More specifically, for the absolute curvature-position-orientation space $(\mathbb{P}H, \mathcal{D})$ we have on the the open submanifold $\bP M/ \{r = 0\}$
\begin{align*}
 [\mathcal{D},\mathcal{D}
 ]= span\{\mathcal{X},\mathcal{K}, \mathcal{Y}=[\mathcal{X},\mathcal{K}]=\partial_\theta\}\\
 [\mathcal{D},[\mathcal{D},\mathcal{D}]]=span\{\mathcal{X},\mathcal{K},\mathcal{Y}, \mathcal{Z}=[\mathcal{X},\mathcal{Y}]=-sin(\theta)\partial_x+cos(\theta)\partial_y\}.
\end{align*}

We will use the Engle structure of the Cartan Prolongation of $(M,H^*)$ to show that the affine subbundle $\Phi^{-1}(A^\star) $ of $H^*$ modeling the curvature detecting cells together with the tangent distribution obtained by the pullback of the contact 1-form $a_J\in T^*M$ through the projection $\mathcal{P}: T^*M\rightarrow M$ and the hyperplane distribution $\mathfrak{H}$ is an Engel manifold. 
\begin{prop}\label{Engel Pullback}
   The affine subbundle $\Phi^{-1}(A^\star)\subset H^*$ together with  the 2-dimensional plane distribution $\mathfrak{H}\cap ker(\mathcal{P}^*a_J)\subset T\Phi^{-1}(A^\star)$ is an Engel manifold.
\end{prop}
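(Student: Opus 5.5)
The plan is to identify $(\varPhi^{-1}(A^\star),\mathfrak{H}\cap\ker(\mathcal{P}^*a_J))$ with the absolute-curvature chart $(\bP H\setminus\{r=0\},\mathcal{D})$ of the Cartan prolongation by an explicit diffeomorphism that carries one distribution onto the other. Engel non-integrability is a diffeomorphism invariant, and the prolongation is Engel \cite{Montgomery}, with the brackets computed above. Hence the pullback distribution is Engel.

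First, I would parametrize $\varPhi^{-1}(A^\star)$. By Theorem \ref{CurvatureDetection} and the construction of $\varPhi$, a point of $\varPhi^{-1}(A^\star)$ over $q\in M$ is a covector $P=b\,X_1^\vee+X_2^\vee\in H_q^*$, with $b\in\bR$ free. So $\varPhi^{-1}(A^\star)$ is a smooth $4$-manifold, diffeomorphic to $M\times\bR$ via $(q,b)$. In gauge coordinates it carries coordinates $(x,y,\theta,b)$. Next I would compute the two $1$-forms. The form $\mathcal{P}^*a_J$ is simply $a_J$ pulled back; in the flat frame it is $-\sin\theta\,dx+\cos\theta\,dy$, and intrinsically it annihilates exactly the vectors whose projection lies in $H=\mathrm{span}\{X_1,X_2\}$. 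By the computation in the proposition preceding this section, $\alpha_P(W)=P(d\mathcal{P}\,W)$. For $W$ with $d\mathcal{P}W=\xi X_1+\eta X_2\in H$, this gives $\alpha(W)=b\xi+\eta$. In coordinates $\alpha=b\,X_1^\vee+X_2^\vee$, i.e. $d\theta+b(\cos\theta\,dx+\sin\theta\,dy)$ in the flat case, pulled back by $\mathcal{P}$. Thus $\alpha$ and $\mathcal{P}^*a_J$ are pointwise linearly independent, because the coefficient of $X_2^\vee$ is $1$ and $a_J$ vanishes on $H$. Their common kernel therefore has rank $2$, and it consists of the vectors $W$ with $d\mathcal{P}W\in\mathrm{span}\{X_1-bX_2\}$, together with the fiber direction $\partial_b$.

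The identification is then $\Psi:(q,b)\mapsto(q,\ell_q)$, where $\ell_q=\mathrm{span}\{X_1(q)+\kappa X_2(q)\}$ with $\kappa=-b$. This sends $\varPhi^{-1}(A^\star)$ diffeomorphically onto the chart $\bP H\setminus\{r=0\}$, with coordinates $(x,y,\theta,\kappa)$. The description of the kernel above says precisely that $W$ is horizontal iff $d\mathcal{P}W\in\ell_q$, which is the defining condition of $\mathcal{D}$. So $d\Psi$ maps $\mathfrak{H}\cap\ker(\mathcal{P}^*a_J)$ onto $\mathcal{D}$, spanned by $\mathcal{X}=X_1+\kappa X_2$ and $\mathcal{K}=\partial_\kappa$. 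The sign of the identification $\kappa=-b$ only reflects the normalization in Definition \ref{liftM}: $\ker P$ contains $X_1-bX_2$. This matches Theorem \ref{CurvatureDetection} up to the convention there.

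Finally, the Engel conditions follow by transport. To stay intrinsic for general $B$ and not rely on flat coordinates, I would verify the brackets directly with the frame $X_1,X_2$. We have $[\partial_\kappa,X_1+\kappa X_2]=X_2$, independent of $\mathcal{D}$. The next bracket is $[X_1+\kappa X_2,X_2]=[X_1,X_2]$ modulo $X_2$, and it is transverse to $H$ because $H$ is contact: $a_J([X_1,X_2])=-da_J(X_1,X_2)\neq0$. This gives ranks $3$ and $4$. The main obstacle I expect is a bookkeeping one rather than a conceptual one: making rigorous that $\alpha(W)$ depends only on $d\mathcal{P}W$ in the frame $X_1^\vee,X_2^\vee$ on the curved base. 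This requires that $X_i^\vee$ are globally defined sections, which holds since $X_1,X_2$ give a global frame of $H$. It also requires that the $\mathcal{P}$-vertical direction $\partial_b$ lies in the kernel of both forms, which is immediate from the tautological property.
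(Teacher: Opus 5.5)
Your proposal is correct and takes the paper's route: identify $\varPhi^{-1}(A^\star)\cong M\times\bR$ with the affine chart $\bP H\setminus\{r=0\}$ of the Cartan prolongation by a diffeomorphism, and transport the Engel property along it. It is more complete than the paper's proof in two respects: you actually check that the differential carries $\mathfrak{H}\cap\ker(\mathcal{P}^*a_J)$ onto $\mathcal{D}$, which the paper only asserts, and which forces the sign $\kappa=-b$ (i.e.\ $bX_1^\vee+X_2^\vee\mapsto\ker(bX_1^\vee+X_2^\vee)$); and you give an independent intrinsic bracket check using only the contact condition on $H$.
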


\begin{proof}
Let $\phi$ be the trivialization  map $\bP H^*/\{r=0\}\rightarrow M\times \bR$ $$[q,P=(P_1,P_2)]\mapsto (q, P_1)$$ and let $\psi$ be the  trivialization map of $\Phi^{-1}(A^\star)\rightarrow M\times \bR$ such that $$(q,P=bX_1^\vee+X_2^\vee)\mapsto (q, b)$$, then $F:=\phi \circ \psi^{-1}: \bP H^*/\{r=0\}\rightarrow \Phi^{-1}(A^\star)$ is a diffeomorphism and such that $dF(D)=\mathfrak{H}\cap ker(\mathcal{P}^*a_J).$ Since diffeomorphisms commute with the Lie bracket and $D$ is an Engel distribution, it follows that $\mathfrak{H}\cap ker(\mathcal{P}^*a_J)$ is an Engel distribution.
\end{proof}

In the table bellow we summarize the mechanism of consecutive prolongations from the retina that leads to the projective line bundle $\mathbb{P}H^*$ which model the absolute curvature and scale feature space and how it relates to the affine subbundle of curvature detecting cells.
\[
\begin{tikzcd}[row sep=small]
\begin{array}{c}
 (\Phi^{-1}(A^\star) \subset T^*M ,\mathfrak{H}\cap ker(\mathcal{P}^*a_J))\\
\scriptstyle\text{(signed curvature-position-orientation)}
\end{array}
\arrow[d,"\simeq"]\\
\begin{array}{c}
(\mathbb{P}H^* /\{r=0\},D )\\
\scriptstyle\text{(absolute curvature-curvature radii-position-orientation)}
\end{array}
\arrow[d] \\
\begin{array}{c}
(M,H) \\
\scriptstyle\text{(position-orientation)}
\end{array}
\arrow[d] \\
\begin{array}{c}
B \\
\scriptstyle\text{(position)}
\end{array}
\end{tikzcd}
\quad
\begin{tikzcd}[row sep=large]
\begin{array}{c}
\scriptstyle\text{Affine Subbundle, Engel Structure}
\end{array} \\
\begin{array}{c}
\scriptstyle\text{Cartan Prolongation, Engel Structure}
\end{array} \\
\begin{array}{c}
\scriptstyle\text{Manifold of retinal contact elements , Contact Structure}
\end{array} \\
~
\end{tikzcd}
\]
\begin{rem}
    In regularity theory or more generally in local problems, the important property is that the Engel distribution $\mathcal{D}$ is bracket generating. In fact, locally all Engel structures are equivalent in a similar way that all contact structures are locally equivalent. 
\end{rem}

\subsection{Associated Lie Algebra Structure}
 In this paragraph we examine the behavior of the generators of the Engel distribution $\cD$ under the Lie Bracket. In analogy to the models of orientation detection where the contact distribution $H\subset TM$ is left-invariant with respect to the action of the $SE(2)$ group, we aim to associate a Lie group structure with the Engel structure. Given any element $X$ of a Lie Algebra $\mathfrak{g}$, we shall denote by $ad_X: \mathfrak{g}\rightarrow \mathfrak{g}, ~ad_X(Y)=[X,Y]$ the adjoint representation of $X$ on $\mathfrak{g}$. Moreover, let $X_1,...,X_r$ be $C^\infty$ vector fields on a manifold. Given any non-empty set of integers $I=(i_1,..,i_\ell), 1\leq i_a\leq r \text{ for all } 1\leq a\leq \ell$, we write the multibracket $$X_I= ad_{X_{i_1}}(ad_{X_{i_2}}(...ad_{X_{i_{\ell-1}}})X_{i_\ell})),$$
and we refer to $\ell=|I|$ as the length of the multibracket. We shall denote by $Lie_\bR(X_1,...,X_r)$ the span over $\bR$ of the multibrackets $X_I$ of any length, which together with the Lie bracket is a Lie algebra.  

\begin{lem}
   The Lie algebra $Lie_\bR(\cX, \cK)$ is infinite dimensional.
\end{lem}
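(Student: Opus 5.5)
The plan is to exhibit explicitly an infinite family of linearly independent multibrackets of $\cX$ and $\cK$. I will work on the chart $\bP H\setminus\{r=0\}$ with coordinates $(x,y,\theta,\kappa)$, where the generators are given by \eqref{KVectors}. The first step is to split $\cX$ into a $\theta$-dependent translation part and an orientation part:
$$T:=\cos(\theta)\partial_x+\sin(\theta)\partial_y,\qquad \mathcal{Y}:=\partial_\theta,\qquad \mathcal{Z}:=-\sin(\theta)\partial_x+\cos(\theta)\partial_y,$$
so that $\cX=T+\kappa\mathcal{Y}$. None of $T,\mathcal{Y},\mathcal{Z}$ involves $\kappa$ or $\partial_\kappa$. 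I will then record the elementary brackets:
\begin{itemize}
\item $[T,\mathcal{Z}]=0$, because both fields have coefficients depending only on $\theta$ and no $\partial_\theta$ component;
\item $[\mathcal{Y},T]=\mathcal{Z}$ and $[\mathcal{Y},\mathcal{Z}]=-T$, so $T,\mathcal{Y},\mathcal{Z}$ span a copy of $\mathfrak{se}(2)$;
\item $[\cK,\cX]=\mathcal{Y}$, since $\cK=\partial_\kappa$ acts only on the coefficient $\kappa$ of $\cX$.
\end{itemize}

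The second step builds the first few multibrackets. We have $\mathcal{Y}=[\cK,\cX]$ and $[\cX,\mathcal{Y}]=[T,\mathcal{Y}]+\kappa[\mathcal{Y},\mathcal{Y}]-\mathcal{Y}(\kappa)\mathcal{Y}=-\mathcal{Z}$. Hence $\mathcal{Z}\in Lie_\bR(\cX,\cK)$, consistently with the bracket computation preceding Proposition \ref{Engel Pullback}.

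The key step is an induction on $n\ge 0$ showing that $\kappa^n\mathcal{Z}$ and $\kappa^nT$ lie in $Lie_\bR(\cX,\cK)$, up to sign. The fields $T,\mathcal{Y},\mathcal{Z}$ annihilate $\kappa$, so $\cX(\kappa^n)=0$ and the Leibniz rule gives
$$[\cX,\kappa^n\mathcal{Z}]=\kappa^n[T,\mathcal{Z}]+\kappa^{n+1}[\mathcal{Y},\mathcal{Z}]=-\kappa^{n+1}T,\qquad [\cX,\kappa^nT]=\kappa^{n+1}[\mathcal{Y},T]=\kappa^{n+1}\mathcal{Z}.$$
Alternating $ad_{\cX}$ starting from $\mathcal{Z}$ therefore produces $\pm\kappa^nT$ and $\pm\kappa^n\mathcal{Z}$ for every $n$. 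Each of these is an honest multibracket $ad_{\cX}^{m}\,ad_{\cX}(ad_{\cK}\cX)$, so it belongs to $Lie_\bR(\cX,\cK)$.

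Finally, I will check linear independence over $\bR$. Suppose a finite combination $\sum_n c_n\kappa^nT$ vanishes. It equals $p(\kappa)T$ for a polynomial $p$, and $T$ is nowhere zero, so $p\equiv0$ on an open set in $\kappa$. Hence all $c_n=0$. The family $\{\kappa^nT\}_{n\ge0}$ is thus infinite and linearly independent, and $Lie_\bR(\cX,\cK)$ is infinite dimensional.

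The only delicate point is bookkeeping of signs and the Leibniz terms $\cX(\kappa^n)$. These vanish precisely because $\cX$ has no $\partial_\kappa$ component, and that observation is what makes the induction close. The statement concerns the open chart; since the Lie algebra of the restricted fields embeds into that of the global fields, this suffices.
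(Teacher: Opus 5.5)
Your proposal is correct and is essentially the paper's argument: both obtain $\partial_\theta$ and $\mathcal{Z}$ from the first brackets and then iterate $\operatorname{ad}_{\cX}$ to produce fields $\pm\kappa^{n}T$, $\pm\kappa^{n}\mathcal{Z}$ with unbounded powers of $\kappa$. Your signs are the correct ones (e.g.\ $[\cK,\cX]=\partial_\theta$ and $[\cX,\partial_\theta]=-\mathcal{Z}$; the paper's displayed brackets are off by signs, harmlessly), and your polynomial-in-$\kappa$ argument makes explicit the independence the paper only asserts. One small bookkeeping slip: iterating from $\mathcal{Z}$ alone gives only $\kappa^{2k+1}T$ and $\kappa^{2k}\mathcal{Z}$, so either run the independence check on $\{\kappa^{2k+1}T\}_{k\ge 0}$ or supply the missing base case via $T=[\cK,\kappa T]$.
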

\begin{proof}
We use the expression of the generators $\{\cX, \cK\}$ of $\mathcal{D}$ in local coordinates $(x, y, \theta, \kappa)$ on $\bP H /\{r=0\}$ as defined in Equation~\ref{KVectors}. Let us define the vector fields
\[
\mathcal{Z}_1 := -\sin(\theta)\partial_x + \cos(\theta)\partial_y, \qquad \mathcal{Z}_2 := \cos(\theta)\partial_x + \sin(\theta)\partial_y.
\]
These satisfy the following commutation relations:
\begin{align*}
[\mathcal{X}, \mathcal{K}] &= \partial_\theta =: \Theta, \\
[\mathcal{X}, \Theta] &= \mathcal{Z}_1, \\
[\mathcal{X}, \mathcal{Z}_1] &= \kappa \mathcal{Z}_2, \\
[\mathcal{X}, \kappa \mathcal{Z}_2] &= \kappa^2 \mathcal{Z}_1.
\end{align*}
From these, we observe the recursive structure of iterated Lie brackets:
\[
\operatorname{ad}_{\mathcal{X}}^2(\mathcal{Z}_1) = [\mathcal{X}, [\mathcal{X}, \mathcal{Z}_1]] = [\mathcal{X}, \kappa \mathcal{Z}_2] = \kappa^2 \mathcal{Z}_1.
\]
Using induction, we can show that:
\[
\operatorname{ad}_{\mathcal{X}}^{2k}(\mathcal{Z}_1) = \kappa^{2k} \mathcal{Z}_1, \qquad 
\operatorname{ad}_{\mathcal{X}}^{2k+1}(\mathcal{Z}_1) = \kappa^{2k+1} \mathcal{Z}_2 \quad \text{for all } k \geq 0.
\]
Since each of these iterated brackets produces a new vector field involving increasingly higher powers of \(\kappa\), the Lie algebra generated by \(\mathcal{D}\) contains infinitely many $\mathbb{R}-$linearly independent vector fields. Therefore, its Lie closure is infinite-dimensional.
\end{proof}

 We notice that the generators $\{\cX, \cR\}$ of the distribution $\cD$ on $\bP H\setminus{\{\kappa=0\}}$ - the absolute curvature/ curvature radii space - generate a finite dimensional Lie algebra. In the next paragraph, we will use this property to associate locally a finite dimensional Lie group to the space of absolute curvature/ curvature radii.

\subsection{Action of Similitude Group on the Engel structure}\label{Lie GroupStructure}
The goal of this paragraph is to associate a finite Lie group with the prolongation $\mathbb{P}H$ such that the Engel structure $\mathcal{D}$ is left-invariant. We first consider the open submanifold $U\subset\mathbb{P}H$ which is the intersection of the affine charts $$U:=\{(q,\ell)\in \mathbb{S}\tau: \ell_q=rX_1(q)+\kappa X_2(q) \text{ and } \kappa \cdot r \neq 0\}\simeq \mathbb{R}^2\times \mathbb{S}^1\times \mathbb{R}^+.$$ In this open subset of $\mathbb{P}H$, we can consider a new pair of generators for the distribution $\mathcal{D}$, namely in local coordinates $(x,y,\theta, r)$ 
\begin{equation}\label{vectorfields}
    \mathcal{X}_{loc}= \mathcal{X}= r cos(\theta)\partial_x+rsin(\theta)\partial_y+\partial_\theta, ~ \mathcal{R}_{loc}=r\mathcal{R}=r \partial_r.  
\end{equation}

 We recall that the similitude group is the semidirect product $SIM(2)=\mathbb{R}^2\rtimes (SO(2)\times \mathbb{R}^+)$, that is the set $\mathbb{R}^2\times \mathbb{S}^1\times \mathbb{R}^+$ with group law
\begin{equation*}
    ((x,y),\theta,\delta)\cdot ((x^\prime,y^\prime),\theta^\prime, \delta^\prime)=((x,y)+\delta\theta (x^\prime,y^\prime), \theta+\theta^\prime, \delta \delta^\prime).
\end{equation*}
The Lie Algebra  $\mathfrak{sim(2)}$ is generated by the left invariant vector fields \begin{equation}\label{SimAlgebra}
\{\partial_\theta, \delta (\cos(\theta)\partial_x + \sin(\theta)\partial_y ), \delta ( -\sin(\theta)\partial_x + \cos(\theta) \partial_y ), \delta\partial_\delta
\}\end{equation}
The group can be identified with the group generated by planar translations $t_{(x,y)}\in \bR^2$, planar rotations $r_\theta\in SO(2)$ and dilations $d_\delta\in \bR^+$, namely for $(x,y,\theta, \delta)\in SIM(2)$
$$t_{(x,y)}: z\mapsto z+(x,y), ~ r_\theta: z\mapsto r_\theta z  \text{ and } d_\delta: z\mapsto \delta z.$$
Consequently, it acts transitively on $\mathbb{R}^2$ giving rise to the quisi-regular representation on $L^2(\mathbb{R}^2)$ via 
\begin{align}\label{SIMRep}
   \pi(x,y,\theta,\delta)f(z)&=D_\delta R_\theta T_{(x,y)} f(z)
   \\& =|\delta|^{-1} f(d_\delta^{-1}(r_\theta^{-1}t_{(x,y)}^{-1})z)=|\delta|^{-1}f(\delta^{-1}r_{-\theta}(z-(x,y))),~  z\in\mathbb{R}^2.
\end{align}

\begin{rem} 
    Our model contains different aspects of models of scale already present in the literature. Indeed, in 
\cite{Scale}, \cite{Scale2},     
 the  differential constraints are given by a differential 2-form which is derived from the contact 1-form on $SE(2)$ via symplectization. In \cite{Duits}
    the relation between scale and invariances of the $SIM(2)$ group was expoited  and the underlying manifold is $\mathbb{R}^2\times\mathbb{S}^1\times\mathbb{R}^+$. The two models were completely independent, while we understand the relation between the different instruments and the relation with curvature.   
\end{rem}

 In the following Lemma we point out that an identification of the dilation parameter $\delta$ with the curvature radius coordinate $r$ imposes a finite Lie group structure, compatible with the Engel distribution $\cD$,  to the open submanifold $U$ of the absolute curvature/ curvature radii space.
\begin{lem}\label{LieAlgebra}
The Lie algebra $Lie_{\bR}(\cX_{loc}, \cR_{loc})$ is isomorphic to $\mathfrak{sim(2)}$. Moreover, the Engel distribution $\mathcal{D}_U:=\{\mathcal{D}_{(q,\ell)}:(q,\ell)\in U\}$ is left-invariant with respect to $SIM(2)$. 
 
\end{lem}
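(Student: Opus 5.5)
Since both claims are local and coordinate-level, I will argue in the gauge coordinates $(x,y,\theta,r)$ on $U\simeq\bR^2\times\bS^1\times\bR^+$ and identify $r$ with the dilation parameter $\delta$ of $SIM(2)$.

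\textbf{Step 1: compute the brackets.} With $\cX_{loc}=r\cos\theta\,\partial_x+r\sin\theta\,\partial_y+\partial_\theta$ and $\cR_{loc}=r\partial_r$, set
\[
E_1:=r(\cos\theta\,\partial_x+\sin\theta\,\partial_y),\qquad E_2:=r(-\sin\theta\,\partial_x+\cos\theta\,\partial_y),\qquad \Theta:=\partial_\theta .
\]
These give $\cX_{loc}=E_1+\Theta$. A direct computation gives
\[
[\cR_{loc},E_1]=E_1,\quad [\cR_{loc},E_2]=E_2,\quad [\Theta,E_1]=E_2,\quad [\Theta,E_2]=-E_1,\quad [E_1,E_2]=0 .
\]
Hence:
\begin{itemize}
\item $[\cR_{loc},\cX_{loc}]=E_1$;
\item $[\cX_{loc},E_1]=[\Theta,E_1]=E_2$;
\item $[\cX_{loc},E_2]=-E_1$.
\end{itemize}
So $E_1$ and $E_2$ lie in $Lie_\bR(\cX_{loc},\cR_{loc})$, and so does $\Theta=\cX_{loc}-E_1$.

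The $\bR$-span of $\{\Theta,E_1,E_2,\cR_{loc}\}$ is closed under brackets by the table above. Therefore
\[
Lie_\bR(\cX_{loc},\cR_{loc})=\mathrm{span}_\bR\{\Theta,E_1,E_2,\cR_{loc}\}.
\]
These four fields are pointwise linearly independent on $U$ (their coefficient matrix has determinant $\pm r^2\neq 0$), so the algebra has dimension exactly $4$.

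\textbf{Step 2: identify the algebra with $\mathfrak{sim}(2)$.} Under $r\leftrightarrow\delta$, the fields $\Theta,E_1,E_2,\cR_{loc}$ are exactly the basis \eqref{SimAlgebra}. This already suggests they are the left-invariant fields of $SIM(2)$.

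To avoid relying on the listed basis, I will check left-invariance directly from the group law. Left translation by $g=(x_0,y_0,\theta_0,\delta_0)$ is
\[
L_g(x,y,\theta,\delta)=\big((x_0,y_0)+\delta_0 r_{\theta_0}(x,y),\ \theta_0+\theta,\ \delta_0\delta\big).
\]
Its differential sends $\partial_\theta\mapsto\partial_\theta$ and $\delta\partial_\delta\mapsto\delta\partial_\delta$. It sends the frame $\delta(\cos\theta,\sin\theta)$ at $(\theta,\delta)$ to $\delta_0\delta\, r_{\theta_0}(\cos\theta,\sin\theta)$ at $(\theta_0+\theta,\delta_0\delta)$, which is the same field evaluated at the image point. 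The same holds for $E_2$.

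So $\Theta,E_1,E_2,\cR_{loc}$ form a basis of left-invariant vector fields. The structure constants from Step 1 then give an explicit Lie algebra isomorphism with $\mathfrak{sim}(2)=\bR^2\rtimes(\mathfrak{so}(2)\oplus\bR)$. In particular, $E_1,E_2$ span the abelian ideal, $\Theta$ acts on it by rotation, and $\cR_{loc}$ acts on it by the identity.

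\textbf{Step 3: left-invariance of $\cD_U$.} On $U$ the distribution is $\cD_U=\mathrm{span}\{\cX_{loc},\cR_{loc}\}$, because rescaling $\cR$ by the nonvanishing function $r$ does not change the span. Since $\cX_{loc}=\Theta+E_1$ and $\cR_{loc}$ are left-invariant, $dL_g$ maps $\cD_U$ at a point onto $\cD_U$ at its image. Thus $\cD_U$ is $SIM(2)$-left-invariant. Combined with Step 2, this also shows it is the left-invariant distribution generated by two elements of $\mathfrak{sim}(2)$ that bracket-generate, consistent with Proposition \ref{Engel Pullback}.

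\textbf{Main obstacle.} The bracket computations are routine. The delicate point is the global identification of $U$ with $SIM(2)$. Gauge coordinates depend on the reference covector $dx$, and $U$ as defined mixes the signs of $\kappa$ and $r$, whereas $\delta>0$. I would handle this by assuming $B=\bR^2$, as in the preceding discussion, where the gauge is global up to the $\bS^1$ coordinate. I would also restrict to the component $r>0$, which the paper writes as $U\simeq\bR^2\times\bS^1\times\bR^+$. The component $r<0$ is treated by the reflection $r\mapsto -r$, which preserves $\cR_{loc}$ and, after reversing orientation, preserves the span of $\cD$.
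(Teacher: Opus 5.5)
Your proposal is correct and follows the paper's route: set $\delta=r$ and recognize $\cX_{loc}=\Theta+E_1$ and $\cR_{loc}$ as left-invariant fields of $SIM(2)$. You also supply two things the paper's one-line proof omits: the bracket computation $[\cR_{loc},\cX_{loc}]=E_1$, $[\cX_{loc},E_1]=E_2$, and the direct left-invariance check from the group law. The brackets are needed for equality rather than just the inclusion $Lie_\bR(\cX_{loc},\cR_{loc})\subseteq\mathfrak{sim}(2)$. There is only a cosmetic slip: the coefficient determinant of $\{E_1,E_2,\Theta,\cR_{loc}\}$ is $r^3$, not $\pm r^2$, which is still nonzero on $U$.
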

\begin{proof}
   
    Let us set $\delta= r$. The generators of $\cD_U$, $\mathcal{X}_{loc}$ and $\mathcal{R}_{loc}$, are $\bR-$linear combinations of the left-invariant vector fields in \ref{SimAlgebra}, which correspond to a basis of the Lie algebra $\mathfrak{sim}(2)$.

\end{proof}

To conclude this section, let us display the way the action of the group $SIM(2)$ affects the symmetries of position, orientation and curvature of a planar curve.
Let $\gamma=(x(t),y(t)): \bR\rightarrow \bR^2$ denote a planar curve and let $(x_0,y_0)$ be the point on the curve at a fixed time $t_0$ with non-zero signed curvature $\kappa(\gamma(t_0))\neq0$. We denote by $\theta$ the angle of the tangent vector $\dot{\gamma}(t)$ at $(x_0,y_0)$ with the reference vector field $\partial_x$. We consider the osculating circle of the curve $\gamma$ at the point $(x_0,y_0)$, namely the circle tangent to $\gamma$ at $t_0$ with radius $R_C(t_0)=\frac{1}{\kappa(\gamma(t_0))}$.

 A translation $t_{-(x_0,y_0)}$ and rotation $r_{-\theta}$ translate the curve to the origin of the plane and rotate it such that $\dot{\gamma}(t)$ at the origin forms a $0$ degree angle with $\partial_x$. Consequently, the osculating circle at $(x_0,y_0)$ is now tangent to the horizontal axis at the origin with center $(0,R_C(t_0))$. Finally, a dilation $d_\delta$ scales the radius of the osculating circle $R_C(t_0)\mapsto \delta R_C(t_0)$. In that sense, we can consider the parameter of dilations $\delta$ to be the radius of the osculating circle, 
$
    \delta= \frac{1}{\kappa(t)}
$
and the space of non-zero radii (or inverse curvatures) to be parametrized by $(\mathbb{R}^+,\cdot)$ 

\begin{center}
\begin{figure}
    \centering
    \begin{tikzpicture}[scale=0.9]


\begin{scope}[shift={(2,0)}]
\node at (1,-1.2) {};
\draw[->] (-1.2,0) -- (1.8,0) node[below right] {$x'$};
\draw[->] (0,-0.5) -- (0,2) node[left] {$y'$};

\draw[thick, domain=-1.2:1.5, smooth, variable=\x] 
  plot({\x}, {0.2*\x*\x});

\filldraw[black] (0, 0) circle (0.03);
\node[below left] at (0, 0) {$(0,0)$};

\draw[->, thick, red] (0,0) -- (1,0);
\node[above] at (0.5,0.05) {$\dot{\gamma}(t)$};

\draw[blue, dashed] (0,0.75) circle (0.75);
\node[blue] at (1.4,0.75) {\small osculating circle};
\end{scope}

\begin{scope}[shift={(8,0)}]
\node at (1,-1.2) {};
\draw[->] (-1.2,0) -- (1.8,0) node[below right] {$x$};
\draw[->] (0,-0.5) -- (0,2.5) node[left] {$y$};
\draw[thick, domain=-1.2:1.5, smooth, variable=\x] 
  plot({\x}, {0.2*\x*\x});
\draw[thick, domain=-1.2:1.5, smooth, variable=\x] 
  plot({\x}, {0.6*\x*\x});
\draw[thick, domain=-1.2:1.5, smooth, variable=\x] 
  plot({\x}, {0.9*\x*\x});
\draw[thick, domain=-1.2:1.5, smooth, variable=\x] 
  plot({\x}, {1.2*\x*\x});
\foreach \r in {0.3, 0.6, 0.9, 1.2} {
  \draw[blue, dashed] (0,\r) circle (\r);
}

\node[blue] at (1.4, 1.2) {\small osculating circle family};

\filldraw[black] (0, 0) circle (0.03);
\node[below left] at (0, 0) {$(0,0)$};

\end{scope}

\end{tikzpicture}
    \caption{The Frenet approximation of a planar curve $\gamma$, shows that the best quadratic approximation is a parabola which expressed in the $(x,y)$ plane, has the shape $y=\kappa \frac{x^2}{2}$ near $\gamma(0)$ where $\kappa$ is its signed curvature. When the curve is not flat, the parabola is approximated by the osculating circle at $\gamma(0)$ \cite{Parent}. The radius of the osculating circle is the scale parameter $\delta$ of the Lie group $SIM(2)$. }
    \label{fig:CurvatureRadii}
\end{figure}
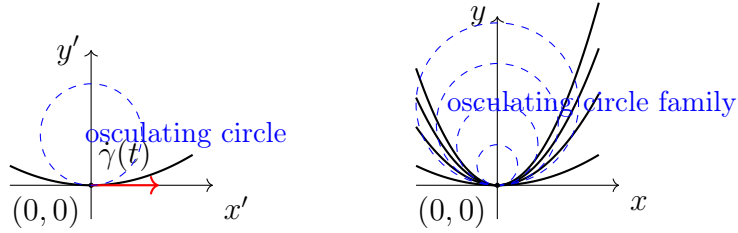

\end{center}

\section{Receptive Profiles of Curvature Selective Cells}
\label{ReceptiveProfs}
The construction of Chapter 3 identifies the space of curvature-selective cells and characterizes the curves compatible with cortical connectivity, but it does not yet describe how individual cells respond to a visual stimulus. This chapter addresses that question by constructing a family of curvature-sensitive receptive profiles extending the Gabor-type filters of the orientation layer to a two-layer architecture.

The construction is guided by two axioms. Hierarchical composability requires that the filters be built in successive layers from orientation-selective filters, reflecting the modular organization of the visual cortex \cite{CortArch}. Minimal uncertainty requires that they minimize a generalized uncertainty principle with respect to the differential operators induced by the $SIM(2)$ geometry identified in the previous chapter \cite{Uncertainty}. We show that both axioms are simultaneously satisfied by exploiting the $SIM(2)$ group structure. In section \ref{SIMTransform} we define a two-layer integral transform  composing the $SE(2)-$transform with an integral transform on $L^2(SE(2))$ and define accordingly the family of curvature receptive profiles (Definition \ref{CurvRecProf}). We show that if the mother wavelet of the first layer -that of the $SE(2)-$transform- satisfies a covariance relation for $SIM(2)$, then the two-layer transform coincides with the $SIM(2)-$transform (Proposition \ref{SIMTr}). Finally, in Section \ref{Charact} we use a $SIM(2)-$adapted uncertainty principle to characterize the curvature receptive profiles realized as the coherent states of this transform (Theorem \ref{CharRec}).

\subsection{Modeling Curvature-Selective Receptive Fields via Two-Layer Wavelet Transforms}\label{SIMTransform}
In this section, we aim to combine the $SE(2)$ transform which models the functionality of orientation sensitive simple cells and the $SIM(2)$ transform- previously used to model the functionality of scale detection to obtain a continuous wavelet transform adequate to model curvature sensitive cells.

 \paragraph{\textbf{Hierarchical Composability}}
 
Curvature processing operates on the output of orientation-selective cells. We recall that the Lie group $SE(2)$ parametrizes the space of simple cells sensitive to orientation, namely given the mother profile $\Psi_0\in L^2(\bR^2)$ introduced in \ref{MotherProfile}  the family of receptive profiles associated to orientation is
$$RP_{SE(2)}=\{\rho(x,y,\theta)\Psi_0: (x,y,\theta)\in SE(2)\}.$$
where $\rho: SE(2)\rightarrow \cU(L^2(\bR^2))$ is the quasi-regular representation \ref{SERep} of $SE(2)$ on $L^2(\bR^2)$. Curvature processing relies on processing the output of  position and orientation detecting cells which is the output $\cO$ of the SE(2)-transform \ref{SETransform} on a stimulus $I\in L^2(\bR^2)$,
$$
    \cO_{\Psi_0}(x,y,\theta)(I)=\langle I, \rho(x,y,\theta)\Psi_0\rangle_{L^2(\bR^2)}.
$$
A curvature-selective cell receives this response as its input. To model this, we need a linear operator on the range $\cH\subset L^2(SE(2))$ of $SE(2)$-transform which maps the response $\cO(x,y,\theta)$ of a cell sensitive to position $(x,y)$ and orientation $\theta$ to the response of a cell sensitive to position, orientation and curvature radius $(\bar{x},\bar{y},\bar{\theta},\bar{\delta})\in SIM(2)$, namely a linear operator $$\cK: \cH\rightarrow L^2(SIM(2)).$$
To construct $\mathcal{K}$, we embed both 
both $SIM(2)$ and $SE(2)$ as subgroups of the affine group
$$
\mathrm{Aff}(\mathbb{R}^2)
= \left\{(A,t) : \mathbf{z}\mapsto A\mathbf{z}+t,\;
A\in GL(2,\mathbb{R}),\; t\in\mathbb{R}^2 \right\},
$$
equipped with the group law
$$
(A,t)(B,s) = (AB,\; As + t).
$$

where
\begin{align*}
SE(2)
= \{(r_\theta,t) : r_\theta\in SO(2),\ t\in\mathbb{R}^2\},
~\text{ and }\\
SIM(2)
= \{(\delta r_\theta,t) : r_\theta\in SO(2),\ \delta>0,\ t\in\mathbb{R}^2\}.
\end{align*}
The inclusion $SE(2)\subset SIM(2)$ makes $SE(2)$ a homogeneous space of $\text{SIM}(2)$
: the group $\text{SIM}(2)$
 acts on $SE(2)$
 by conjugation inside $Aff(\bR^2)$.
Explicitly, for $h=(\delta r_\theta,t)\in SIM(2)$ and $g=(r_{\bar\theta},\bar t)\in SE(2)$, the action map is 
$$ (s,e)\mapsto h\diamond g:=
h g h^{-1}
= \big( r_\theta r_{\bar\theta} r_\theta^{-1},\;
\delta r_\theta \bar t + t - r_\theta r_{\bar\theta} r_\theta^{-1} t \big).
$$
 This conjugation induces the quasi-regular representation \begin{align*}\tilde\pi: SIM(2)\rightarrow ~ &  \cU(L^2(SE(2))),\\ &\pi(\delta r_\theta,t)f(r_{\bar{\theta}}, \bar{t})= \delta^{-1}f( r_{\theta}^{-1}r_{\bar{\theta}}r_{\theta},\delta r_{\theta}^{-1}(\bar{ t})+t-r_\theta^{-1}r_{\bar{\theta}} r_\theta (t))
\end{align*}
which is unitary with respect to the Haar measure on $SE(2)$ and will serve as the building block for $\mathcal{K}$ in the next section.

In order to state the main result of this section we need to recall the notion of admissibility of a vector  in a Hilbert space with respect to a unitary representation of a locally compact group $G$.
\begin{defn}{\cite{Fuhr}}\label{DefAdm}
    Let $(r, H_r)$ denote a strongly continuous unitary representation of a locally compact group $G$. A vector $\phi\in H$ is called \textit{admissible} iff the operator $H\rightarrow L^2(G), ~ f\mapsto \langle f, r(g)\phi\rangle$ is an isometry.
\end{defn}

We are now ready to define a family of orientation, position, and curvature sensitive receptive profiles as functions on \( SE(2) \), that reflects the hierarchical structure of the visual cortex.
\begin{defn}\label{CurvRecProf} Given an admissible \emph{mother window} \( \tilde{\Psi}_0 \in L^2(SE(2)) \),
    the orientation-position-curvature receptive profiles on $SE(2)$ are the family
\begin{equation}
    RP_{Curvature}=\{\tilde{\Psi}_{h}:= \tilde{\pi}(h)\tilde{\Psi}_0:SE(2)\rightarrow \bR: h\in SIM(2)\}
\end{equation} and the response of this family of receptive fields is given by the operator 
 \begin{align}
 \begin{split}
          \cK_{\tilde{\Psi}}: L^2(SE(2))\rightarrow L^2(SIM(2))\\
   \cO(g)\mapsto  \cK{(h)}(\cO)&=\langle\mathcal{O},\tilde{\pi}(h)\tilde{\Psi}_0\rangle\\
   &=\int_{SE(2)}\cO(g)\overline{\tilde{\Psi}_{h}(g)}d\mu_{g}
 \end{split}
 \end{align}

where $d\mu_{g}$ denotes the Haar measure on $SE(2)$.
\end{defn}
  
 We will show that one can reinterpret the operator $\cK$ as a wavelet transform on the group $SIM(2)$, if $\tilde{\Psi}_{0}$ satisfies some sufficient conditions. First, we introduce some notation in order to differentiate group multiplication and conjugation: the action of $h\in SIM(2)$ on $ X\in \mathbb{R}^2$ will be denoted by 
 $$X\mapsto h\bullet X.$$
 
 Based on the early work by Antoine and Murenzi \cite{AntoineMurenzi}, Sharma and Duits \cite{Duits} the generalized continuous wavelet transform on the $SIM(2)$ group, $\mathcal{W}_\Phi I: SIM(2)\rightarrow $ of a function $I\in L^2(\bR^2)$ is obtained by means of a convolution kernel $\Phi: \mathbb{R}\rightarrow \bC$ via 
 \begin{equation}
     \mathcal{W}_\Phi I(h)=\int_{\bR^2}I(X)\overline{\Phi(h^{-1}\bullet X)}dX, ~ h\in SIM(2).
 \end{equation}
 Assuming $\Phi\in L^2(\bR^2)$ the transform can be rewritten as $\mathcal{W}_\Phi I(h)=\langle \pi(h)\Phi, f\rangle_{L^2(\bR^2)},$ where $\pi$ denotes the quasi-regular representation of $SIM(2)$ on $L^2(\bR^2)$ (see \ref{SIMRep}).  Moreover, a vector $\Phi\in L^2(\bR^2)$ is called admissible\cite{AntoineMurenzi} if \begin{equation}\label{admissibility}
    \int_{\bR^2} \frac{|\hat{\Phi}(w)|^2}{|w|^2}dw=1.
\end{equation} 
\begin{defn}
    We say that a $\Psi_0\in L^2(\bR^2)$ is $SIM(2)-$covariant if there exists some $F:\bR^+\rightarrow \bR^*$ such that \begin{equation}\label{SIMCovariance}
        \pi(h)\Psi_0(-)=F(\delta)\Psi(h^{-1}\bullet -),
    \end{equation}
    for every $h=(x,y,\theta, \delta)\in SIM(2).$
\end{defn}
The following proposition is the central result of this section. It shows that under natural conditions on the mother profiles, the two-layer operator $\mathcal{K}$
 is equivalent to the generalized continuous wavelet transform on $\text{SIM}(2)$
 studied in \cite{AntoineMurenzi, Duits}. This equivalence is what connects the hierarchical cortical model to the existing harmonic analysis literature.

\begin{prop}\label{SIMTr} Let $\Psi_0\in L^2(\bR^2)$ satisfying condition \ref{Calderon} and $\tilde{\Psi}_0\in L^2(SE(2))$  admissible in the sense of \ref{DefAdm}. If $\Psi_0$ satisfies the $SIM(2)$ covariance relation \ref{SIMCovariance}, then for $h\in SIM(2)$ and $X\in \bR^2$
\begin{align*}
    \cI_h(X):= \int_{SE(2)}\overline{\rho(g)\Psi_0(X)\tilde{\pi}(h)\tilde{\Psi}_0(g)}d\mu_g = \pi(h)f(X)\\
    \text{ where } f(X):= \int_{SE(2)}\overline{\Psi_0(g^{-1}\bullet X)\tilde{\Psi}_0(g)}d\mu_g.
\end{align*}
Moreover, if $f$ satisfies the admissibility condition \ref{admissibility},
then $\cK_{\tilde{\Psi}_0}(\cO_{\Psi_0}I)= F(\delta)\cW_{f}I$. In particular, when $F=\delta^{-1}$ it is the generalized continuous wavelet transform associated to the quasi-regular representation $\pi: SIM(2)\rightarrow \cU(L^2(\bR^2))$. 

\end{prop}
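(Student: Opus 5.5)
The argument has two parts. First we establish the intertwining identity $\cI_h=\pi(h)f$ by a change of variables on $SE(2)$. Then we substitute it into the two-layer response and exchange the order of integration.

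\textbf{Step 1: covariance identity.} Using $\rho(g)\Psi_0(X)=\Psi_0(g^{-1}\bullet X)$ for $g\in SE(2)$, write
\[
\cI_h(X)=\int_{SE(2)}\overline{\Psi_0(g^{-1}\bullet X)}\;\overline{\tilde\pi(h)\tilde\Psi_0(g)}\,d\mu_g .
\]
Unwinding $\tilde\pi(h)\tilde\Psi_0(g)=\delta^{-1}\tilde\Psi_0(h^{-1}\diamond g)$, we substitute $g'=h^{-1}\diamond g$, so that $g=hg'h^{-1}$. Conjugation by $SIM(2)$ preserves $SE(2)$. The Jacobian of $g'\mapsto hg'h^{-1}$ on the Lebesgue/Haar measure of $\bR^2\times\bS^1$ is $\delta^2$, coming from the translation part $\bar t\mapsto\delta r_\theta\bar t+\dots$. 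This factor is exactly what makes $\tilde\pi$ unitary.

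The key algebraic point is the identity
\[
g^{-1}\bullet X=(hg'^{-1}h^{-1})\bullet X=h\bullet\big(g'^{-1}\bullet(h^{-1}\bullet X)\big).
\]
The $SIM(2)$-covariance relation \eqref{SIMCovariance} lets us pull the outer $h$ out of $\Psi_0$. Applied in the form $\Psi_0(h\bullet Y)=F(\delta)^{-1}\pi(h^{-1})\Psi_0(Y)$, it turns the integrand into one that depends on $X$ only through $h^{-1}\bullet X$. We then collect the scalar factors $\delta^{-1}$, $\delta^2$, and $F$. The result is that $\cI_h(X)$ equals a constant multiple of $f(h^{-1}\bullet X)$, and the normalisation in $\pi$ makes this $\pi(h)f(X)$. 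The main obstacle is this bookkeeping: reconciling the power of $\delta$ produced by the Haar Jacobian and by $\tilde\pi$ with the factor $F(\delta)$ and with the $|\delta|^{-1}$ in \eqref{SIMRep}. We will carry the constants explicitly, and if a residual multiple of $F(\delta)$ survives, it is precisely the one appearing in the final statement.

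\textbf{Step 2: composing the transforms.} By Definition \ref{CurvRecProf} and \eqref{SETransform},
\[
\cK_{\tilde\Psi_0}(\cO_{\Psi_0}I)(h)=\int_{SE(2)}\int_{\bR^2}I(X)\,\overline{\rho(g)\Psi_0(X)}\,d X\;\overline{\tilde\pi(h)\tilde\Psi_0(g)}\,d\mu_g .
\]
We justify Fubini as follows. Restricting to $I\in PW_R$, condition \eqref{Calderon} makes $\cO_{\Psi_0}$ bounded into $L^2(SE(2))$. Admissibility of $\tilde\Psi_0$ (Definition \ref{DefAdm}) makes $\cK$ an isometry. Absolute convergence then follows from a Cauchy–Schwarz estimate, first on a dense class such as compactly supported $\tilde\Psi_0$ and Schwartz $\Psi_0$, and extends by continuity. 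Swapping the integrals yields $\int_{\bR^2}I(X)\,\cI_h(X)\,dX$, with the conjugation placed as in the definition of $\cI_h$.

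By Step 1 this equals $\langle I,\pi(h)f\rangle$ up to the factor $F(\delta)$ extracted in Step 1, that is, $F(\delta)\cW_fI(h)$. The admissibility condition \eqref{admissibility} on $f$ guarantees, via Antoine–Murenzi, that $\cW_f$ is the genuine isometric $SIM(2)$ wavelet transform. When $F=\delta^{-1}$, the factor coincides with the normalisation of the quasi-regular representation \eqref{SIMRep}. The composite is then the generalized continuous wavelet transform for $\pi$, which proves the last assertion.
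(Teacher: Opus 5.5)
Your route is the paper's: change variables by conjugation ($g'=h^{-1}gh$, Jacobian $\delta^2$), strip the outer $h$ with the covariance relation, then exchange the integrals. Your convention $h^{-1}\diamond g$ is the consistent one, and your intermediate expression $\cI_h(X)=\delta\int_{SE(2)}\overline{\Psi_0(hg'^{-1}h^{-1}\bullet X)\,\tilde\Psi_0(g')}\,d\mu_{g'}$ is exactly where the paper arrives. The gap is the next step. Writing $\Psi_0(h\bullet Y)=F^{-1}\,\pi(h^{-1})\Psi_0(Y)$ (strictly $F(\delta^{-1})^{-1}$, since you apply the relation to $h^{-1}$) does not remove $h$. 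It only replaces the window $\Psi_0$ by the $h$-dependent window $\pi(h^{-1})\Psi_0$. What you obtain is $\int_{SE(2)}\overline{[\pi(h^{-1})\Psi_0](g'^{-1}\bullet Y)\,\tilde\Psi_0(g')}\,d\mu_{g'}$ at $Y=h^{-1}\bullet X$. That is a function of $h^{-1}\bullet X$, but not a multiple of $f(h^{-1}\bullet X)$. In fact, by \eqref{SIMRep}, $\pi(h^{-1})\Psi_0(Y)=\delta\,\Psi_0(h\bullet Y)$. So the relation in the form you read it holds for every $\Psi_0$ with $F(\delta)=\delta^{-1}$ and carries no information. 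If your step were valid, it would give $\cI_h\in\bC\,\pi(h)f$ for every window, which is false in general. For a pure translation $h$ by $t$ one has $h\bullet(g'^{-1}\bullet(h^{-1}\bullet X))=g'^{-1}\bullet(h^{-1}\bullet X)+t$. Hence $\cI_h$ is the same synthesis as $\pi(h)f$, but with the window $\Psi_0(\cdot+t)$, which is not a multiple of $\Psi_0$.

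What closes the step, and what the paper's proof actually invokes when it ``applies the covariance relation'', is the eigen-relation $\Psi_0(h\bullet Y)=F(\delta)\,\Psi_0(Y)$, i.e.\ $\pi(h)\Psi_0\in\bC\,\Psi_0$. With it the outer $h$ drops out, and the integral becomes a scalar times $f(h^{-1}\bullet X)$. Be aware that this is a very strong requirement: no nonzero $\Psi_0\in L^2(\bR^2)$ satisfies $\Psi_0(Y+t)=c(t)\Psi_0(Y)$ for all $t$. So the paper's step is itself only formal, but your version does not reach even that.

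Separately, you defer the constants, yet your two targets are not independent. Since $\cK_{\tilde\Psi_0}(\cO_{\Psi_0}I)(h)=\int_{\bR^2}I(X)\,\cI_h(X)\,dX$, the scalar in front of $\pi(h)f$ in Step 1 determines the one in front of $\cW_fI$ in Step 2. You therefore cannot obtain exactly $\pi(h)f$ and then a free factor $F(\delta)$, and ``whatever factor survives is the one in the statement'' is not an argument. The paper carries one explicit scalar ($F(\delta)\delta$ in its computation) through both identities and reads off $F=\delta^{-1}$. When you do the count, note that $f(h^{-1}\bullet X)=\delta\,\pi(h)f(X)$ under \eqref{SIMRep}. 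Your Fubini justification is a reasonable addition that the paper omits.
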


\begin{proof} Let $h=(x,y,\theta,\delta)\in SIM(2)$ and $X\in \bR^2$, then we have the following
\begin{align*}
    \cI_h(X)=\int_{SE(2)}\overline{\Psi_0(g^{-1}\bullet X)\delta^{-1}\tilde{\Psi}_0(h\diamond g)}d\mu_g \overset{g^\prime= h\diamond g}{=} \delta\int_{SE(2)}\overline{\Psi_0(h(g^{\prime})^{-1}h^{-1}X)\tilde{\Psi_0}(g^\prime)}d\mu_{g^\prime}.
\end{align*}
Applying the covariance relation, we obtain
\begin{align*}
    \cI_h(X)= F(\delta) \delta \int_{SE(2)}\overline{\Psi_0((g^\prime)^{-1}h^{-1}X)\tilde{\Psi}_0(g^\prime)}d\mu_{g^\prime} =F(\delta)\delta (\pi(h)f(X)).
\end{align*}
Now, we can write
    \begin{align*}
    \cK_{\Psi_0}{(h)}(\cO)=\int_{SE(2)}\cO(g)\overline{\tilde{\pi}(h)\tilde{\Psi}_0(g)}d\mu_g&=\int_{SE(2)}\langle I,\rho(g)\Psi_0 \rangle\overline{\tilde{\pi}(h)\tilde{\Psi}_0(g)}d\mu_g\\
    &= \int_{\bR^2} \int_{SE(2)}I(X)\overline{\rho(g)\Psi_0(X)}\overline{\tilde{\pi}(h)\tilde{\Psi}_{0}(g)}d\mu_gdX\\
    &= \int_{\bR^2}I(X)\underbrace{\int_{SE(2)}\overline{\rho(g)\Psi_{0}(X)}\overline{\tilde{\pi}(h)\tilde{\Psi}_{0}(g)}d\mu_g}_{\cI_h(X)}dX\\
    &=\int_{\bR^2}I(X)F(\delta)\delta\overline{\pi(h)f(X)}dX= F(\delta)\delta\cW_fI
\end{align*}
which is equal to the $SIM(2)$-wavelet transform $\cW_fI$ exactly when $F(\delta)=\delta^{-1}.$

\end{proof}
We consider $\pi(h)f$ as \textbf{the effective receptive profile} of a second-layer cell sensitive to an element $h\in SIM(2)$.
\begin{rem}
Although $\mathcal{K}$
 is constructed in two steps, the proposition shows that under the stated conditions on $\Psi_0$ and $\tilde{\Psi}_0$
 it collapses to a single $SIM(2)$-wavelet transform. This identification gives direct access to the invertibility and sampling theory developed in literature (see for instance \cite{Duits}), where in particular square-integrability of the representation $h \mapsto \pi(h)$ and admissibility of $f$
 in the sense of \eqref{admissibility} are shown to be sufficient but not necessary. Determining the minimal conditions on $\mathcal{K}$ under which the full sampling theory applies remains an open question.
\end{rem}
    \begin{figure}
     \centering
     \begin{tikzcd}
         Contrast \arrow{r} & Orientation \arrow{r} &  Curvature
     \end{tikzcd}
     \begin{tikzcd}
    PW_R \arrow{r}{SE(2)}& \cH\subset L^2(SE(2)) \arrow{r}{}&  L^2(SIM(2)) \arrow[from=1-1,to=1-3, bend left=40]{r}{SIM(2)}
\end{tikzcd}
 
     \caption{The two-layer detection hierarchy. A visual stimulus (contrast) is first processed by orientation-selective cells via the $SE(2)$
-transform, producing a response in $H\subset L^2(SE(2))$. The operator $\mathcal{K}$
 then maps this response to $L^2(\text{SIM}(2))$, where curvature-selective cells read out position, orientation, and curvature  simultaneously. The composite map from contrast to curvature is equivalent to the $\text{SIM}(2)$
-wavelet transform with effective kernel $f$, when the orientation mother profile satisfies the $SIM(2)$ covariance.
}
     \label{fig:Wavelets}
 \end{figure}

\subsection{Characterization of Receptive Profiles as Minimizers of an Uncertainty Principle}\label{Charact}

In Section~\ref{Lie GroupStructure} we showed that, away from the singularity  $\kappa=0$, the vector fields spanning the Engel distribution are left-invariant with respect to the group $SIM(2)$. In Section~\ref{SIMTransform} we further established that, under suitable constraints, the response of curvature-sensitive cells can be expressed as a $SIM(2)$ transform with effective receptive profiles of the form
\[
\{\pi(h)f\}_{h\in SIM(2)} \subset L^2(\mathbb{R}^2).
\] 

We now characterize the shape of these effective receptive profiles under the assumption that they detect features optimally, in the sense of satisfying an uncertainty principle adapted to the curvature feature space. The same mechanism underlies both orientation detection and curvature detection.
 
 The curvature feature space is associated to the Lie group $SIM(2)$ acting on $\mathbb{R}^2$ through the quasi-regular representation $\pi$ ~\eqref{SIMRep}. Since $\pi$ is unitary, for each $X\in \mathfrak{sim}(2)$, the differential $d\pi(X)$ is the skew-adjoint operator on $L^2(\mathbb{R}^2)$:
\begin{align*}
d\pi : \mathfrak{sim}(2) &\longrightarrow \mathrm{SkewAdj}(L^2(\mathbb{R}^2)), \\
X &\longmapsto d\pi(X)f
= \left.\frac{d}{dt}\right|_{t=0} \pi(\exp(tX))f,
\qquad f\in L^2(\mathbb{R}^2).
\end{align*}

In particular, the infinitesimal roto-translation and dilation generators 
$\cX_{loc}$ and $\cR_{loc}$ give rise to the skew-adjoint operators
\[
Y_1 := d\pi(\cX_{loc}), 
\qquad 
Y_2 := d\pi(\cR_{loc}),
\qquad 
Y_3 := d\pi([\cX_{loc},\cR_{loc}]),
\]
which act on the effective receptive profiles.
\begin{lem}\label{DifOper}
Let $(\delta,\theta,x_0,y_0)\in SIM(2)$ and let $(\xi,\eta)$ denote the transformed coordinates in $\mathbb{R}^2$ defined by
\[
(\xi,\eta)=(\delta,\theta,x_0,y_0)^{-1}\bullet (x,y).
\]
In these coordinates, the skew-adjoint operators $X_1,X_2,X_3$ take the form
\[
Y_1 = -\eta \partial_\xi + \xi \partial_\eta + \partial_\xi,
\qquad
Y_2 = \xi \partial_\xi + \eta \partial_\eta,
\qquad
Y_3 = \partial_\eta.
\]
\end{lem}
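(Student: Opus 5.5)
The plan is to reduce the computation to the infinitesimal action of $SIM(2)$ on $\mathbb{R}^2$, evaluated in the moving coordinates $(\xi,\eta)$. Write $h=(\delta,\theta,x_0,y_0)$ and $\pi(h)f(z)=\delta^{-1}f(h^{-1}\bullet z)$ as in \eqref{SIMRep}. For a left-invariant field $X\in\mathfrak{sim}(2)$, the action of $d\pi(X)$ on the profile $\pi(h)f$ is
\[
\left.\frac{d}{dt}\right|_{t=0}\pi(h\exp(tX))f(z)=\left.\frac{d}{dt}\right|_{t=0}\delta(t)^{-1}f\big(\exp(-tX)\bullet(h^{-1}\bullet z)\big).
\]
Since $(\xi,\eta)=h^{-1}\bullet z$, this reduces everything to the fundamental vector field on $\mathbb{R}^2$ generated by $\exp(-tX)$, evaluated at $(\xi,\eta)$. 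There is also a possible scalar contribution coming from the factor $\delta(t)^{-1}$.

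The first step is to express $\mathcal{X}_{loc}$ and $\mathcal{R}_{loc}$ from \eqref{vectorfields} at the identity $(0,0,0,1)$, with $r=\delta$ as in Lemma \ref{LieAlgebra}. This gives $\mathcal{X}_{loc}(e)=\partial_x+\partial_\theta$, that is, the infinitesimal translation along the first axis plus the infinitesimal rotation, and $\mathcal{R}_{loc}(e)=\delta\partial_\delta$, the infinitesimal dilation. Next I compute the bracket $[\mathcal{X}_{loc},\mathcal{R}_{loc}]$ directly from the coordinate expressions. It is $-r\cos\theta\,\partial_x-r\sin\theta\,\partial_y$, hence a pure translation at the identity.

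The second step is to read off the generators of the planar action. The rotation $r_\theta$ produces $-\eta\partial_\xi+\xi\partial_\eta$, translations produce $\partial_\xi$ or $\partial_\eta$, and the dilation produces the Euler field $\xi\partial_\xi+\eta\partial_\eta$. Summing these contributions by linearity of $d\pi$ gives the three displayed operators: $Y_1$ is rotation plus $\partial_\xi$, $Y_2$ is the Euler operator, and $Y_3$ is a translation generator. By the chain rule, the differentiation in $z$ transfers to $(\xi,\eta)$ because $h^{-1}\bullet$ is affine.

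The main obstacle is bookkeeping of conventions rather than any conceptual difficulty. Three points need checking. First, the sign coming from $\exp(-tX)$ versus left or right action. Second, the scalar term $-f$ produced by differentiating $\delta^{-1}$: a genuinely skew-adjoint dilation operator is $\xi\partial_\xi+\eta\partial_\eta+1$, so the stated $Y_2$ must be read modulo this zeroth-order term, or the normalization must be adjusted. Third, the bracket: with $\mathcal{X}_{loc}=r\cos\theta\,\partial_x+r\sin\theta\,\partial_y+\partial_\theta$ and $\mathcal{R}_{loc}=r\partial_r$, the translation obtained at the identity points along $\partial_x$. Hence $Y_3$ comes out as $\pm\partial_\xi$, and it becomes $\partial_\eta$ only after a rotation by $\pi/2$ of the reference frame, as in the identification of $\mathcal{Z}$ with the second translation.

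I would fix these conventions explicitly and identify which convention yields exactly the stated forms. I would record any sign or constant discrepancy and note that it does not affect the commutation relations used later in the uncertainty principle.
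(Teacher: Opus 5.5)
\textbf{Verdict.} The statement is false as written, and your computations show this. The gap in the proposal is the claim that some choice of convention makes the three displayed formulas hold exactly.

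\textbf{Comparison with the paper.} Your route is the paper's: differentiate $\pi(h\exp(tX))f$, i.e.\ use the intertwining relation $X[\pi(h)u]=\pi(h)\,d\pi(X)u$, and push the derivative through the affine map $z\mapsto h^{-1}\bullet z$. Your intermediate steps are correct. You have $[\mathcal{X}_{loc},\mathcal{R}_{loc}]=-r\cos\theta\,\partial_x-r\sin\theta\,\partial_y$, and differentiating the factor $\delta^{-1}$ does produce a zeroth-order term. Carrying the computation through with the paper's conventions gives
\[
d\pi(\mathcal{X}_{loc})=-\partial_\xi+\eta\partial_\xi-\xi\partial_\eta,\qquad d\pi(\mathcal{R}_{loc})=-(\xi\partial_\xi+\eta\partial_\eta+1),\qquad d\pi([\mathcal{X}_{loc},\mathcal{R}_{loc}])=\partial_\xi .
\]
The paper's proof does not catch this. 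It computes only $Y_1$, and its displayed lines flip the sign of the rotation part between consecutive steps. It then asserts $Y_2$ and $Y_3$ ``through the same procedure''.

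\textbf{Where the proposal fails.} No choice of convention produces the three displayed operators together.
\begin{enumerate}
\item Since $d\pi$ is a Lie algebra homomorphism (an anti-homomorphism under the opposite convention), $Y_3=\pm[Y_1,Y_2]$. For the stated operators, $[\partial_\xi-\eta\partial_\xi+\xi\partial_\eta,\ \xi\partial_\xi+\eta\partial_\eta]=\partial_\xi$. This holds because the Euler field commutes with the rotation field and $[\partial_\xi,\xi\partial_\xi+\eta\partial_\eta]=\partial_\xi$. Global sign flips and constants added to $Y_2$ do not change this direction.
\item Your $\pi/2$ rotation of the reference frame does not help. It would also move the translation term in $Y_1$ to $\partial_\eta$, so it cannot fix $Y_3$ alone. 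The translation direction in $Y_3$ is forced to equal the one in $Y_1$.
\item The zeroth-order term cannot be normalized away while keeping the lemma's claim of skew-adjointness. On $L^2(\mathbb{R}^2)$ the adjoint of $\xi\partial_\xi+\eta\partial_\eta$ is $-(\xi\partial_\xi+\eta\partial_\eta)-2$. Removing $\delta^{-1}$ from $\pi$ instead destroys unitarity.
\end{enumerate}
The honest endpoint of your argument is therefore a corrected statement: the formulas above, up to a global sign convention. It is not the displayed one.

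\textbf{Downstream effects.} Your closing remark that the discrepancies do not matter later is too optimistic. The constant in $Y_2$ indeed leaves $[Y_1,Y_2]$ unchanged. But the direction of $Y_3$ enters the right-hand side $|\langle Y_3u,u\rangle|$ of the uncertainty inequality. The constant in $Y_2$ also adds a zeroth-order term to the minimizer equation $(Y_2-i\lambda Y_1)u=0$ used in the next theorem.
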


\begin{proof}
    We determine $Y_1=d\pi(\cX_{loc})$ from the intertwining relation $$\cX_{loc}\pi(\delta,\theta,x_0,y_0)u=\pi(\delta,\theta,x_0,y_0)(Y_1u)$$ for every $u \in L^2(\bR^2)$. 
    Recall that the representation is given by
\[
\pi(g)u(x,y)
= \delta^{-1}
u\!\left(\delta^{-1} r_{-\theta}(x-x_0,y-y_0)\right).
\]
Introduce the transformed coordinates
\[
(\xi,\eta)
= \delta^{-1} r_{-\theta}(x-x_0,y-y_0).
\]
Then
\[
\pi(g)u(x,y)=\delta^{-1} u(\xi,\eta).
\]

\medskip
    
Define the differential operator $Y_1=-\eta\partial\xi+\xi\partial_\eta+\partial_\xi$. A direct calculation gives 
    \begin{align*}
        \cX_{loc}\pi(\delta,\theta,x_0,y_0)u(x,y)&= (\delta cos(\theta)\partial_x+\delta sin(\theta)\partial_y+\partial_\theta) \delta^{-1}u(\delta^{-1}r_{-\theta}(x-x_0,y_y-y_0))\\
        &=\delta^{-1}\partial_{\xi}u(\xi,\eta)+ \delta^{-1}\eta \partial_\xi u(\xi,\eta)-\delta^{-1}\xi\partial_\eta u(\xi,\eta)\\
        &=\delta^{-1}(Y_1u(\xi,\eta))=\pi(\delta,\theta,x_0,y_0)((Y_1u)(x,y)).       
    \end{align*}
    Through the same procedure, we compute $\cR_{loc}\pi(\delta,\theta,x_0,y_0)u=\pi(\delta,\theta,x_0,y_0)[(\eta\partial_\eta+\xi\partial_\xi) u]=\pi(\delta,\theta,x_0,y_0)Y_2 u$ and $[\cX_{loc},\cR_{loc}]\pi(\delta,\theta,x_0,y_0)u=Y_3 \pi(\delta,\theta,x_0,y_0)u$.
\end{proof}

Following the methodology applied on receptive profiles for orientation sensitive cells \cite{CortArch}, the shape of a curvature receptive profile is determined by the  uncertainty principle for connected Lie groups 
 (see Theorem 2.4,\cite{FollandUncertainty}) in terms of the vector fields $X_1, X_2$ and $X_3$,

\begin{equation}\label{SIMUncertaintyEq}
    \|X_1 u\|_{L^2(\bR^2)} \|X_2 u\|_{L^2(\bR^2)} \geq \frac{1}{2}
 |\langle X_3 u, u\rangle _{L^2 (\bR^2)}|.
\end{equation}

 This is an uncertainty principle for certain generators of the $sim(2)$ algebra, while a full uncertainty principle on the group is yield by three operators and four non-zero commutators,
which generate in turn a system of four differential equations.  It has been shown that there is no non-trivial canonical function
which minimizes the full uncertainty system associated with the similitude group for the operators corresponding to rotation, translation and dilation \cite{AntoineMurenzi}.  Instead, solutions have been obtained by considering operators in the enveloping algebra \cite{Dahlke} or using the symmetries in the set of commutators of $\mathfrak{sim}(2)$\cite{AliAntoine}. 

In the following theorem we characterize the Effective receptive profiles of curvature sensitive cells as minimizers of only one non-zero commutator- that of roto-translation and dilation- which is the one obtained by the generators of the Engel distribution.

\begin{thm}\label{CharRec}
Let $u\in L^2(\bR^2)$ and $u\in \{\pi(h)f\}_{h\in SIM(2)}$. Then $u$ should satisfy the first-order partial differential equation

\begin{equation}
    Lu=0, ~ \lim_{\|(\xi,\eta)\|\rightarrow\infty}u=0
\end{equation}
where $L$ is the complex vector field $$L=\big(\eta+1+i\lambda \xi\big)\partial_\xi+(-\eta+i\lambda \xi)\partial_\eta~,~ \alpha,b: \Omega\rightarrow \bC \text{ and }\lambda\in \bR^*.$$

\end{thm}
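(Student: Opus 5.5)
The proof follows the standard route for minimizers of a Heisenberg-type inequality, as in the $SE(2)$ case of \eqref{UncertaintyEq}. The mechanism is the usual Cauchy--Schwarz argument behind Theorem 2.4 of \cite{FollandUncertainty}. For skew-adjoint operators $A=Y_1$, $B=Y_2$ acting on a sufficiently regular $u$, we have
\[
|\langle [A,B]u,u\rangle| = |\langle Bu,Au\rangle-\langle Au,Bu\rangle| = 2|\mathrm{Im}\langle Au,Bu\rangle|\le 2\|Au\|\,\|Bu\|.
\]
The inequality \eqref{SIMUncertaintyEq} follows once we use $[Y_1,Y_2]=Y_3$, which holds because $d\pi$ is a Lie algebra homomorphism and $[\cX_{loc},\cR_{loc}]$ is mapped to $Y_3$ by Lemma \ref{DifOper}. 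Equality forces two things. First, the Cauchy--Schwarz step must be sharp, so $Au$ and $Bu$ are linearly dependent. Second, the inner product $\langle Au,Bu\rangle$ must be purely imaginary. Together these give $Y_1u=i\lambda Y_2u$ for some real $\lambda\neq 0$ (the case $\lambda=0$ or $Y_2u=0$ is degenerate and excluded by $u\neq0$ in $L^2$). So the plan is: (i) derive the minimizer condition $(Y_1-i\lambda Y_2)u=0$; (ii) substitute the coordinate expressions of Lemma \ref{DifOper}; (iii) collect coefficients of $\partial_\xi$ and $\partial_\eta$ to obtain the vector field $L$.

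The minimizing profile is assumed to be an effective profile $u=\pi(h)f$. By the intertwining relation in the proof of Lemma \ref{DifOper}, the condition transfers to the mother $f$ in the transformed coordinates $(\xi,\eta)=h^{-1}\bullet(x,y)$, so it suffices to work in those coordinates. Step (ii) is then a direct computation:
\[
Y_1-i\lambda Y_2=(-\eta+1-i\lambda\xi)\partial_\xi+(\xi-i\lambda\eta)\partial_\eta .
\]
Next, we rescale $\lambda$ and reorder or choose signs of the coordinates to match the normalized form $(\eta+1+i\lambda\xi)\partial_\xi+(-\eta+i\lambda\xi)\partial_\eta$ stated in the theorem. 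The boundary condition $\lim_{\|(\xi,\eta)\|\to\infty}u=0$ is imposed because $u\in L^2(\bR^2)$ should be a genuine localized profile, and because the uncertainty identity requires integration by parts without boundary terms, so $u$ must lie in the common domain of $Y_1,Y_2$.

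The main obstacle is step (iii). Matching the computed operator exactly to the displayed $L$ requires care with the sign conventions in $Y_1$. The $\partial_\eta$ coefficient from the direct computation is $\xi-i\lambda\eta$, not $-\eta+i\lambda\xi$, so I expect to need either a rotation of coordinates by a quarter turn, or the freedom to replace $\lambda$ by $-\lambda$ or $1/\lambda$ combined with multiplying $L$ by a nonzero complex constant, which does not change the kernel. I would try the quarter-turn relabelling $(\xi,\eta)\mapsto(\eta,-\xi)$ first. If exact agreement with the displayed coefficients fails, I would state the result as $Lu=0$ with $L$ proportional to $Y_1-i\lambda Y_2$, which is the content actually delivered by the equality case of \eqref{SIMUncertaintyEq}. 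A secondary, technical point is justifying the skew-adjointness of $Y_1,Y_2$ on the chosen domain, for example Schwartz functions. I would handle this by working on the dense subspace of smooth vectors of $\pi$, where $d\pi$ is well defined and skew-symmetric.
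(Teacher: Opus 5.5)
Your steps (i)--(ii) follow the paper's route. The paper also takes the equality case of \eqref{SIMUncertaintyEq} and writes the minimizer condition as $(X_2-i\lambda X_1)u=0$, i.e.\ $(Y_2-i\lambda Y_1)u=0$. That is your $(Y_1-i\mu Y_2)u=0$ with $\mu=-1/\lambda$, after multiplying by $i/\lambda$. It then substitutes Lemma \ref{DifOper}. Your operator
\[
P_\lambda:=Y_1-i\lambda Y_2=(1-\eta-i\lambda\xi)\partial_\xi+(\xi-i\lambda\eta)\partial_\eta
\]
is what the lemma's formulas actually give.

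The gap is step (iii), and it cannot succeed. Take $c\in\bC^*$ and $\lambda\neq0$. The real and imaginary parts of $cP_\lambda$ span over $\bR$ the same plane as $Y_1,Y_2$. So every nonzero real combination of them is an affine vector field with linear part $\bigl(\begin{smallmatrix} b&-a\\ a&b\end{smallmatrix}\bigr)$, $(a,b)\neq(0,0)$, which is invertible. An affine relabelling of $(\xi,\eta)$ (quarter turns and reflections included) only conjugates linear parts. Changing $\lambda$ or multiplying by a complex constant does not change this real span. However, $\mathrm{Re}\,L=(1+\eta)\partial_\xi-\eta\partial_\eta$ has the nonzero singular linear part $\bigl(\begin{smallmatrix}0&1\\ 0&-1\end{smallmatrix}\bigr)$. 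Hence no quarter turn, no replacement $\lambda\mapsto\pm\lambda^{\pm1}$, and no complex rescaling turns your equation into $Lu=0$. Already at $(\xi,\eta)=(0,1)$ you have $P_\lambda=-i\lambda\partial_\eta$ against $L=2\partial_\xi-\partial_\eta$.

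The paper's proof does not bridge this either. It asserts the displayed $L$ ``via variable separation'' without computation. Its closing remark (vanishing at $(0,1)$ when $\lambda=0$) is true of your $P_0=Y_1$, not of the displayed $L$, which at $\lambda=0$ is $(1+\eta)\partial_\xi-\eta\partial_\eta$ and vanishes nowhere. So the displayed $L$ does not follow from the minimizer condition. Your fallback, $Lu=0$ with $L$ a complex multiple of $Y_1-i\lambda Y_2$, is what the argument supports, and you should state that rather than try to force the match.

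There is a second problem: steps (i) and (ii) are inconsistent as you combine them. The Cauchy--Schwarz equality argument needs $Y_2^*=-Y_2$. The lemma's $Y_2=\xi\partial_\xi+\eta\partial_\eta$ is not skew-symmetric on $L^2(\bR^2)$; its formal adjoint is $-\xi\partial_\xi-\eta\partial_\eta-2$. Differentiating $\pi$, with its factor $\delta^{-1}$, along the dilation subgroup gives $d\pi(\cR_{loc})=-(\xi\partial_\xi+\eta\partial_\eta+1)$. So if you work with $d\pi$ on smooth vectors, as you propose, step (i) is justified. But the minimizer equation then becomes $(P_\lambda-i\lambda)u=0$ (up to the sign convention for $\lambda$), which has a zeroth-order term and is not given by a vector field. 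If instead you use the lemma's formula, step (i) does not apply.

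Do not quote the lemma for the commutator either: its formulas give $[Y_1,Y_2]=\partial_\xi$, not $\partial_\eta$. This only affects the lower bound. Finally, $\lambda=0$ is not excluded by $u\neq0$. Any nonzero $L^2$ function radial about $(0,1)$ satisfies $Y_1u=0$ and gives the trivial equality $0=0$, so it is ruled out only by the hypothesis $\lambda\in\bR^*$.
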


\begin{proof}
    The uncertainty principle \ref{SIMUncertaintyEq} is minimized for $u\in L^2(\bR^2)$ satisfying the uncertainty minimizer equation
\begin{equation}
    (X_2 -i\lambda X_1) u = 0
\end{equation}
for some $ \lambda \in \bR$. Via Expressing $Y_1$ and $Y_2$ in coordinates $(\xi,\eta)$ as in Lemma \ref{DifOper} and via variable separation we obtain
\begin{align*}
(X_2 -i\lambda X_1) u& =\alpha(\xi, \eta)\partial_\xi+b(\xi,\eta)\partial_\eta
    \\&=\big(\eta+1+i\lambda \xi\big)\partial_\xi+(-\eta+i\lambda \xi)\partial_\eta= Lu.
\end{align*} 
Moreover, $L$ is a non-vanishing complex vector field when $\lambda=0$, since $L$ vanishes only when $\lambda=0$ at $(\xi,\eta)=(0,1)$.
\end{proof}
 
\begin{acknowledgements}
  We would like to thank Nicola Arcozzi, Mattia Galeotti and Alessandro Sarti for the helpful discussions. 
\end{acknowledgements}

\bibliographystyle{acm} \bibliography{Curvature}

\end{document}